\newtheorem{theorem}{Theorem}
\newtheorem{lemma}{Lemma}
\newtheorem{corollary}{Corollary}
\newcommand*\cyl[1]{\mathrm{cyl}( #1 )}
\newcommand*\Emod[3]{#1 \equiv #2 ~ (\mathrm{mod~} #3)}
\newcommand*\Modop{~\mathrm{mod}~}
\begin{document}
\title{A ``problem of time'' in the multiplicative scheme for the $n$-site hopper}
\author[a,b,c]{Fay~Dowker$^{1,2,3}$, Vojt\v{e}ch~Havl\'\i\v{c}ek$^{4}$, Cyprian~Lewandowski$^{5}$ and Henry~Wilkes$^{1}$}
\address{$^1$ Theoretical Physics Group, Blackett Laboratory, Imperial College, London, SW7~2AZ, UK}
\address{$^2$ Institute for Quantum Computing, University of Waterloo, ON, N2L~2Y5, Canada}
\address{$^3$ Perimeter Institute for Theoretical Physics, 31 Caroline Street North, Waterloo, Ontario, N2L~2Y5, Canada}
\address{$^4$ Quantum Group, Department of Computer Science, University of Oxford, Wolfson Building, Parks Road, Oxford, OX1~3QD, UK}
\address{$^5$ Department of Physics, Massachusetts Institute of Technology, Cambridge, MA~02139, USA}
\ead{\mailto{hw2011@ic.ac.uk}}
\pacs{03.65.Ta}

\begin{abstract}
Quantum Measure Theory (QMT) is an approach to quantum mechanics, based on the path integral, in which quantum theory is conceived of as a generalised stochastic process. One of the postulates of QMT is that events with zero quantum measure do not occur, however this is not sufficient to give a full picture of the quantum world. Determining the other postulates is a work in progress and this paper investigates a proposal called the Multiplicative Scheme for QMT in which the physical world corresponds, essentially, to a set of histories from the path integral. This scheme is applied to Sorkin's $n$-site hopper, a discrete, unitary model of a single particle on a ring of $n$ sites, motivated by free Schr\"{o}dinger propagation. It is shown that the multiplicative scheme's global features lead to the conclusion that no non-trivial, time-finite event can occur.
\end{abstract}

\noindent{\it Keywords\/}: quantum foundations, histories, quantum measure theory, co-event, discrete propagation

\section{Introduction} \label{Introduction}

One motivation for reformulating quantum mechanics beyond the Copenhagen interpretation is to understand quantum mechanics without observers or measurements. Another is to bring quantum theory into harmony with relativity and the four dimensional world view of General Relativity. Such a framework would be relevant for quantum cosmology and could contribute to the construction of a theory of quantum gravity. A promising starting point for a realist and relativistic formulation of quantum mechanics is the path integral or sum-over-histories approach to quantum mechanics, initiated by P.A.M. Dirac \cite{1933_dirac_action_principle_qm_original_Paul_Dirac} and developed by R.P. Feynman \cite{1948_feynman_spacetime_qm_Richard_Feynman}. In the approach, for a given system, a \textit{history} is the most complete description possible of the system throughout spacetime. For a single particle on a fixed spacetime a history   would be a spacetime path, for two particles it would be a pair of spacetime paths and for a quantum field in a fixed spacetime it would be a field configuration on that spacetime. The path integral shifts the focus from the state vectors, Hilbert space, Schr\"{o}dinger evolution, and state vector collapse of canonical quantum mechanics to spacetime histories,  spacetime events and their amplitudes. This allows us to consider statements about the quantum world without appealing to the notion of collapse and measurement, and without having to perform the split of the universe into quantum system and classical measuring system necessary in Copenhagen quantum mechanics.

The path integral approach to quantum foundations has been championed particularly by J.B. Hartle and R.D. Sorkin and there is much common ground between Hartle's path integral version of Decoherent Histories \cite{1989_Quantum_Cosmology_James_Hartle, 1993_spacetime_approach_James_Hartle, 1993_spacetime_QM_lectures_Jame_Hartle} and Sorkin's Quantum Measure Theory (QMT) which conceives of quantum theory as a generalised stochastic process \cite{1987_role_of_time_Rafael_Sorkin, 1991_problems_with_causailty_Rafael_Sorkin, 1991_sum_over_histories_EPRB_Sukanya_Sinha_and_Rafael_Sorkin, 1994_sum_rules_Rafael_Sorkin}. Quantum physics, in the path integral approach, is rooted in spacetime. The dynamics and initial condition of a quantum system are encoded in a Schwinger-Keldysh double path integral for the \textit{quantum measure} \cite{1948_feynman_spacetime_qm_Richard_Feynman, 1987_role_of_time_Rafael_Sorkin}, or, equivalently, \textit{decoherence  functional}  \cite{1989_Quantum_Cosmology_James_Hartle,1994_sum_rules_Rafael_Sorkin, 1986_measurements_distributed_in_time_Carlton_Caves} on spacetime events. Some frontiers of current knowledge in the path integral  approach to quantum theory are signposted by more-or-less technical questions such as how to define the path integral for quantum mechanics as a mathematically well-defined integral over path space, including both ``ultraviolet'' problems associated with the continuity of spacetime (see \textit{e.g.} \cite{2012_towards_a_fundamental_Rafael_Sorkin, 2010_extending_quantum_measure_Fay_Dowker_and_Steven_Johnston_and_Sumati_Surya, unpublished_Path_Integral_Robert_Geroch})  and ``infrared''  problems associated with questions involving arbitrarily long times \cite{2012_towards_a_fundamental_Rafael_Sorkin, 2011_2_site_quantum_random_walk_Stan_Gudder_and_Rafael_Sorkin}.  The central \textit{conceptual} question  in QMT is: ``What, in the path integral approach to quantum theory, corresponds to the physical world?''

In  Quantum Measure Theory, the concept of a \textit{co-event} has been proposed as that which represents the physical world. A co-event is an answer to every yes-no physical question that can be asked about the world, once the class of  spacetime histories has been fixed. In this paper we investigate one proposal -- the \textit{multiplicative scheme} -- for the physical laws governing co-events. We work in the context of a one parameter family of discrete unitary quantum models, the $n$-site hopper  \cite{2012_towards_a_fundamental_Rafael_Sorkin}. The discreteness can be viewed as a ploy to sidestep the ultraviolet problems, referred to above, that arise in trying to define the path integral in the continuum, or more positively as something we may actually want to keep in the end. Indeed, Sorkin's motivation for developing  Quantum Measure Theory is to employ it to build a quantum theory of gravity in which continuum spacetime emerges from a fundamentally discrete substructure.

In section~\ref{n-site} we introduce the $n$-site hopper as a quantum system within Quantum Measure Theory. We define the decoherence functional, the quantum measure and the law of preclusion. In section~\ref{Nothing_Happens} we define the concept of a co-event and the Multiplicative Scheme. We prove the main result of the paper, that every non-trivial finite time event is a subevent of an event of measure zero. In the Multiplicative Scheme this implies that no non-trivial finite time event happens. We discuss the implications of this result in section~\ref{analysis}.
  
\section{The $n$-site Hopper} \label{n-site}

The $n$-site hopper model, proposed by Sorkin in \cite{2012_towards_a_fundamental_Rafael_Sorkin}, is a unitary quantum model of a particle on a ring of $n>1$ spatial sites $i \in \mathbb{Z}_n$, with discrete times steps $t=0,1,2\dots$. For more details on the 2-site hopper see \cite{2010_extending_quantum_measure_Fay_Dowker_and_Steven_Johnston_and_Sumati_Surya, 2011_2_site_quantum_random_walk_Stan_Gudder_and_Rafael_Sorkin} and on the 3-site hopper see \cite{2013_3_site_energy_Rafael_Sorkin}. 
  
  In QMT \cite{2007_co-events_Rafael_Sorkin, 2007_anhomomorphic_logic_Rafael_Sorkin, 2012_logic_to_quantum_Rafael_Sorkin} a system is defined by a triple $(\Omega, \mathfrak{A}, D)$ where $\Omega$ is the \textit{set of histories}, $\mathfrak{A}$ is the \textit{event algebra} and $D$ is the \textit{decoherence functional}. An event is a set of histories and the event algebra is the set of all events to which the theory assigns a measure. The event algebra is, then, a subset of the power set of $\Omega$. The decoherence functional is a function with two arguments $D: \mathfrak{A}\times\mathfrak{A} \rightarrow \mathbb{C}$ such that 
\begin{itemize}
	\item $D(A,B) = D(B,A)^* \quad \forall ~ A,B \in \mathfrak{A} $ \,;
	\item For any finite collection of events $A_1, \ldots, A_m \in \mathfrak{A} $, the $m\times m$ matrix \\$M_{ab}:=D(A_a,A_b)$ is positive semi-definite\,;
	\item $D(\Omega, \Omega ) = 1$\,;
	\item $D(A \cup B,C)= D(A,C) + D(B,C) \quad \forall ~ A,B,C\in \mathfrak{A} $ such that $A \cap B = \emptyset$\,.
\end{itemize}
The \textit{quantum measure}, $\mu(E)$, of an event $E\in \mathfrak{A}$ is given by the diagonal of the decoherence functional $D$:
\begin{equation}
	\mu(E):= D(E,E)\,.
\end{equation}
  
We will now define each of these entities for the $n$-site hopper. First, the spacetime lattice is fixed so a history for the system is simply a path, $\gamma$, the hopper can take on the lattice. Each history is an infinite sequence of spatial sites which can be conceived of as a function $\gamma: \mathbb{N} \rightarrow\mathbb{Z}_n$ or as an infinite string of sites. For example, here is the beginning of a particular history displayed in the two ways:
\begin{equation}
	 \gamma(t) =\cases{0 & for $t=0$
	 	\\ 3 & for $t=1$
	 	\\ 5 & for $t=2$
	 	\\ 5 & for $t=3$
	 	\\ 2 & for $t=4$
		\\ \, \vdots & }
	 	\quad \leftrightarrow \quad \gamma = 03552\ldots \, .
\end{equation}
Figure \ref{n_site_diagram} gives a visualisation of this history for the $8$-site hopper. The history space, $\Omega(n)$, is the set of all possible paths $\gamma$.

\begin{figure}[t!]
	\newlength{\sidelength}
	\setlength{\sidelength}{0.1\textwidth}
	\newlength{\sideheight}
	\setlength{\sideheight}{0.707106781\sidelength}
	\newcommand*\addsite[2]{\node[circle,draw,line width=0.01\sidelength] (site_#2) at (#1) {$#2$} }
	\newcommand*\pathbetween[2]{\draw[line width=0.01\sidelength,dashed,postaction={decorate}] (site_#1) -- (site_#2)}
	\begin{indented}
	\item[] \tikz[decoration={markings,mark=at position 0.5 with {\arrow[line width=0.02\sidelength]{angle 90}}}]{
			\addsite{0,0}{0};
			\addsite{\sidelength,0}{1};
			\addsite{\sidelength + \sideheight,\sideheight}{2};
			\addsite{\sidelength + \sideheight,\sidelength + \sideheight}{3};
			\addsite{\sidelength,\sidelength + 2\sideheight}{4};
			\addsite{0,\sidelength + 2\sideheight}{5};
			\addsite{-\sideheight, \sidelength + \sideheight}{6};
			\addsite{-\sideheight, \sideheight}{7};
			\draw[line width=0.01\sidelength] (site_0) -- (site_1) -- (site_2) -- (site_3) -- (site_4) -- (site_5) -- (site_6) -- (site_7) -- (site_0) ;
			\pathbetween{0}{3};
			\pathbetween{3}{5};
			\pathbetween{5}{2};
			\draw[line width=0.01\sidelength,dashed,postaction={decorate}] (site_5) .. controls +(0.75\sidelength,-\sidelength) and +(-0.75\sidelength,-\sidelength) .. (site_5) ;
		}
	\end{indented}
	\caption{\label{n_site_diagram}A visualisation of the $8$-site hopper. The dashed lines represent the path $03552\ldots$ up to time $t=4$. Note that the shape of the dashed lines is not physically meaningful.}
\end{figure}

Each path $\gamma$ that starts at site $i$ at $t=0$ will have an initial amplitude of $\psi_i$ and $\sum_{i = 0}^{n-1} |\psi_i|^2 = 1$.  In the canonical approach the initial amplitudes are the initial state, $\vert \psi \rangle = \sum_i \psi_i \vert i \rangle$. However, in a sum over histories approach to quantum mechanics where the focus is on the paths, not the state vector, these initial amplitudes should be thought of as summarising as much of the past as is necessary to make further predictions. From then on, at each time step the hopper transfers from site $j$ to $k$ with amplitude
\begin{equation}
	U(n)_{jk} := \frac{{\omega_n}^{(j-k)^2}}{\sqrt{n}} \,,
\end{equation}
where
\begin{equation}
	{\omega_n}:= \cases{
		e\left( \frac{1}{2n} \right) & for even $n$
		\\ e\left({\frac{1}{n}} \right) & for odd $n$
	}
\end{equation}
and $ e(x) := \rme^{\rmi 2\pi x}$. So for $n$ odd, $\omega_n$ is an $n^{\mathrm{th}}$ root of unity and for $n$ even, $\omega_n$ is a $(2n)^{\mathrm{th}}$ root of unity. The form of $U(n)$ is motivated by the propagator of a nonrelativistic free particle of mass $m$ in continuum $1$-$d$ space from $x_i$ to $x_f$ in time $T$, given by \cite{1948_feynman_spacetime_qm_Richard_Feynman}
\begin{equation}
	K^{(\mathrm{free})}(x_f,T;x_i,0) = \sqrt{\frac{m}{\rmi 2\pi \hbar T}} \, \mathrm{exp}\left( \frac{\rmi m(x_f-x_i)^2}{2\hbar T}\right)\,.
\end{equation}
Moreover, $U(n)$ is a unitary $n \times n$ matrix,  $U(n)_{jk}$ only depends on the distance between sites rather than their absolute position, and $U(n)_{j(k \pm n)} = U(n)_{jk}$ reflects the periodicity of the ring.

The form of the model as a  process gives the space of histories a temporal structure. For a given time $t\in \mathbb{N}$ consider a finite path consisting of the first $t+1$ sites of a history. We call such a truncated history a $t$\textit{-path} $\gamma_t$. The set of all such $t$-paths, for a given $t$, is $\Omega_t(n)$ with  cardinality  $|\Omega_t(n)| = n^{t+1}$.

Each  $t$-path, $\gamma_t$, is associated to an event called a cylinder set which corresponds to the physical statement  ``the hopper's history agrees with $\gamma_t$ for the first $t+1$ sites.'' The cylinder set, $\cyl{\gamma_t}$,  is formally defined by
\begin{equation}
	\cyl{\gamma_t}:=\left\lbrace \gamma \in \Omega(n)~ \middle\vert ~ \gamma(t')=\gamma_t(t') \mathrm{~for~} 0 \leq t' \leq t \right\rbrace\,.
\end{equation}

For any two cylinder sets, either they are disjoint or one  contains the other. The cylinder set for $\gamma_t$ can be expressed as a disjoint union of cylinder sets of $t'$-paths where $t'>t$ in the following way. First, for the $t$-path $\gamma_t$, consider the $(t+1)$-path which is the \textit{extension} of $\gamma_t$, by the site $j  \in \mathbb{Z}_n$ at time $t+1$. We write this extension as $\gamma_tj$: 
\begin{equation}
	\gamma_tj (t'):= \cases{
		\gamma_t(t') & for $0\leq t' \leq t$ 
		\\ j & for $t'=t+1$\,. }
\end{equation}
The cylinder set of $\gamma_t$ is the disjoint union of all such extensions of $\gamma_t$:
\begin{equation}
	\cyl{\gamma_t} = \bigcup \limits_{j=0}^{n-1} \cyl{\gamma_tj} \label{cylinder_expansion}\,.
\end{equation}
Each cylinder set in this union can be similarly expressed as a disjoint union of cylinder sets of extended paths, 
and so on. 

We will call an event $E$ which can be expressed as a finite union of cylinder sets \textit{time-finite} because we can determine whether any given history $\gamma$ is an element of $E$ or not by only looking at the first $(t+1)$ values of $\gamma$ for some finite $t$. For a time-finite event, $E$, and each $t\in \mathbb{N}$ we define a corresponding set of $t$-paths
\begin{equation}
	E_t := \left\lbrace \gamma_t \in \Omega_t(n) ~ \middle\vert ~ \cyl{\gamma_t} \subseteq E \right\rbrace\,.
\end{equation}
we also define the \textit{defining time} $t_E$, which is the \textit{least} time $t$ such that
\begin{equation}
	E = \bigcup \limits_{\gamma_t \in E_t} \cyl{\gamma_t}\,.
\end{equation}

For example, consider the event $E$ corresponding to the statement ``The hopper is at site $0$ at $t=0$ or at site $1$ at $t=1$'' for the $2$-site hopper:
\begin{equation}
E = \cyl{0} \cup \cyl{11} = \cyl{00} \cup \cyl{01} \cup \cyl{11}\,.
\end{equation} 
Then, we have $E_0=\lbrace 0 \rbrace$ and $E_1=\lbrace 00, 01, 11 \rbrace$ and the defining time $t_E=1$.
 
The \textit{path amplitude}  $a[\gamma_t]$ of a $t$-path is given by the product of the sequence of transfer amplitudes and the initial amplitude
\begin{equation}
	a[\gamma_t]:= \psi_{\gamma_t(0)} \prod \limits_{t'=0}^{t-1} U(n)_{\gamma_t(t')\,\gamma_t(t'+1)}\,.
\end{equation}  
The decoherence functional of two $t$-path cylinder sets is given by
\begin{equation}
	D \left(\cyl{\gamma_t},\cyl{\gamma_t'} \right) = a[\gamma_t] a[\gamma_t']^* \delta_{\gamma_t(t),\gamma_t'(t)}\,,
\end{equation}
where the Kronecker delta ensures the paths meet at their end points and is a feature of unitary systems \cite{1993_spacetime_QM_lectures_Jame_Hartle}. The decoherence functional $D(E, F)$ of two time-finite events is then given by 
\begin{equation}
	D(E,F) = \sum \limits_{\gamma_t \in E_t} \sum \limits_{\gamma_t' \in F_t} D\left(\cyl{\gamma_t},\cyl{\gamma_t'} \right)\,,
\end{equation}
where $t = \mathrm{max}\{t_E, t_F\}$ is the greater of the two defining times.
\begin{lemma}
	The sum
	\begin{equation}
		D(E,F) = \sum \limits_{\gamma_t \in E_t} \sum \limits_{\gamma_t' \in F_t} D\left(\cyl{\gamma_t},\cyl{\gamma_t'} \right)\,,
	\end{equation}
	is independent of $t$ if $t \ge \mathrm{max}\{t_E,t_F\}$. 
\end{lemma}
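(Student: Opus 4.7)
The plan is to prove the statement by induction on $t$, showing that the sum is unchanged when $t$ is replaced by $t+1$. Since $t \ge \max\{t_E, t_F\}$ ensures $E$ and $F$ are already expressed in terms of the $t$-path cylinder decomposition, the finer $(t+1)$-path decomposition is obtained from \eqref{cylinder_expansion}: every $\gamma_t \in E_t$ contributes the $n$ extensions $\gamma_t j$ for $j \in \mathbb{Z}_n$ to $E_{t+1}$, and likewise for $F_{t+1}$. The heart of the argument is then a one-step unitarity calculation.

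Concretely, I would write the sum at time $t+1$ as
\begin{equation}
\sum_{\gamma_t \in E_t}\sum_{\gamma_t' \in F_t}\sum_{j,k=0}^{n-1} a[\gamma_t j]\,a[\gamma_t' k]^*\,\delta_{j,k}\,,
\end{equation}
and then use the factorisation $a[\gamma_t j] = a[\gamma_t]\,U(n)_{\gamma_t(t)\,j}$ (which follows directly from the definition of the path amplitude) to pull the $\gamma_t,\gamma_t'$ factors outside. The Kronecker delta collapses the $k$ sum, leaving
\begin{equation}
\sum_{\gamma_t \in E_t}\sum_{\gamma_t' \in F_t} a[\gamma_t]\,a[\gamma_t']^* \sum_{j=0}^{n-1} U(n)_{\gamma_t(t)\,j}\,U(n)_{\gamma_t'(t)\,j}^*\,.
\end{equation}

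The remaining inner sum is precisely $(U(n)\,U(n)^\dagger)_{\gamma_t(t)\,\gamma_t'(t)}$, and by unitarity of $U(n)$ this equals $\delta_{\gamma_t(t)\,\gamma_t'(t)}$. Substituting back reproduces the sum at time $t$, completing the inductive step. The base case $t = \max\{t_E, t_F\}$ is the definition of $D(E,F)$, and iterating gives the result for all larger $t$.

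The only step requiring care is the bookkeeping that $E_{t+1}$ really is $\{\gamma_t j : \gamma_t \in E_t,\, j \in \mathbb{Z}_n\}$ when $t \ge t_E$; this is immediate from \eqref{cylinder_expansion} and the fact that the cylinder sets appearing are pairwise disjoint, so no extra $(t+1)$-paths can enter $E_{t+1}$ beyond the extensions of $E_t$. After that, the entire content of the lemma is unitarity of the one-step transfer matrix, so I do not anticipate any genuine obstacle.
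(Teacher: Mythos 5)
Your proof is correct and is essentially the paper's own argument run in the opposite direction: the paper starts from the time-$t$ sum, resolves the Kronecker delta as $\sum_f U_{\gamma_t(t)f}U_{\gamma_t'(t)f}^*$ by unitarity, and reassembles the extended amplitudes into the time-$(t+1)$ sum, whereas you start at $t+1$ and collapse back to $t$ — the same decomposition into extensions, the same amplitude factorisation, and the same appeal to $UU^\dagger = \mathbf{1}$. No substantive difference.
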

\begin{proof}
	\begin{eqnarray}
		D(E,F)& = \sum \limits_{\gamma_t \in E_t} \sum \limits_{\gamma_t' \in F_t}a[\gamma_t] a[\gamma_t']^* \delta_{\gamma_t(t),\gamma_t'(t)}
		\\ & = \sum \limits_{\gamma_t \in E_t} \sum \limits_{\gamma_t' \in F_t}a[\gamma_t] a[\gamma_t']^* \sum \limits_{f=0}^{n-1} U_{\gamma_t(t),f} {U_{\gamma_t'(t),f}}^*
		\\  & = \sum \limits_{f=0}^{n-1} \sum \limits_{\gamma_t \in E_t} a[\gamma_t] U_{\gamma_t(t),f}\sum \limits_{\gamma_t' \in F_t} a[\gamma_t']^* {U_{\gamma_t'(t),f}}^*
		\\ & =  \sum \limits_{f=0}^{n-1} \sum \limits_{\gamma_{t+1} \in E_{t+1} \vert_f} a[\gamma_{t+1}] \sum \limits_{\gamma_{t+1}' \in F_{t+1} \vert_f} a[\gamma_{t+1}']^*
		\\ & = \sum \limits_{\gamma_{t+1} \in E_{t+1}} \sum \limits_{\gamma_{t+1}' \in F_{t+1}} D\left(\cyl{\gamma_{t+1}},\cyl{\gamma_{t+1}'} \right) \, ,
	\end{eqnarray}
	where
	\begin{equation}
		E_t \vert_f := \left\lbrace \gamma_t \in E_t ~ \middle\vert ~ \gamma_t(t)=f \right\rbrace \,.
	\end{equation}
\end{proof}
This gives us the quantum measure of a time-finite event $E$ with defining time $t_E$:
\begin{eqnarray}
	\mu(E) &= \sum \limits_{\gamma_t \in E_t} \sum \limits_{\gamma_t' \in E_t} D\left(\cyl{\gamma_t},\cyl{\gamma_t'} \right)
	\\ &= \sum \limits_{f=0}^{n-1} \left\vert \sum \limits_{\gamma_t \in E_t\vert_f} a[\gamma_t] \right\vert^2 \label{lower_time_measure}\,,
\end{eqnarray}
where $f$ is the final site of the $t$-paths and $t \geq t_E$.

Since the quantum measure of time-finite events is defined, the event algebra $\mathfrak{A}(n)$ for the $n$-site hopper certainly includes all time-finite events. If $\mu$ were a classical measure it would have an extension from the semi-ring of time-finite events to the full sigma algebra generated by the time-finite events and $\mathfrak{A}$ would then be this sigma algebra. However, the standard extension theorems cannot be used for quantum measures and we do not yet know to which events the measure can be extended \cite{2012_towards_a_fundamental_Rafael_Sorkin, 2010_extending_quantum_measure_Fay_Dowker_and_Steven_Johnston_and_Sumati_Surya}. For the purposes of this paper we will only need to consider the time-finite events so we will not address this question here.

\subsection{Preclusions} \label{Preclusion}

In QMT, the quantum measure $\mu$ is viewed as a generalisation of a probability measure: quantum theories are
generalisations of classical stochastic processes. Classical theories, in this view, are special cases of 
quantum measure theories.\footnote{Classical and quantum theories are respectively the first and second levels in a countable, nested hierarchy of measure theories characterised by how much interference there is between histories  \cite{1994_sum_rules_Rafael_Sorkin}.}
In general,  quantum interference means that the quantum measure does not satisfy the Kolmogorov sum rules and cannot be interpreted as a probability measure. The question then is,  what is the physical significance of the quantum measure? The full answer is yet to be worked out, but at the current stage of development QMT assigns a central role to the concept of \textit{preclusion} \cite{1984_everett_interpretation_Robert_Geroch}: an event of zero measure is precluded (``almost surely does not happen'').  It is useful to adopt the term \textit{null event} to mean an event of measure zero and the Law of Preclusion is then:  A null event is precluded. 

Null events are therefore of central importance, so we will now investigate their structure in the case of the $n$-site hopper. For each $t$-path $\gamma_t$ we define the phase, $\tilde{a}[\gamma_t]$, of the path by
\begin{equation}
	\tilde{a}[\gamma_t] := n^{t/2} \prod \limits_{t'=0}^{t-1} U(n)_{\gamma_t(t'),\gamma_t(t'+1)}\,.
\end{equation}
Note that this definition of the phase of a $t$-path does not include the phase of the initial amplitude but only depends on the structure of the path. Then
\begin{eqnarray}
	\tilde{a}[\gamma_t] &= {\omega_n}^{(\gamma_t(0) - \gamma_t(1))^2} {\omega_n}^{(\gamma_t(1) - \gamma_t(2))^2} \ldots {\omega_n}^{(\gamma_t(t-1) - \gamma_t(t))^2}
	\\ &= {\omega_n}^p \, ,
\end{eqnarray}
where $p \in \mathbb{Z}_{n'(n)}$ and
\begin{equation}
	n'(n):= \cases{
		2n & for even $n$
		\\ n & for odd $n$.
	} 
\end{equation}
Now, for each time-finite event $E \in \mathfrak{A}$, each $i,f\in \mathbb{Z}_n$, $p \in \mathbb{Z}_{n'(n)}$, and each $t\ge t_E$ let us define
\begin{equation}
	s^{t}_{ifp} (E) := \left\vert \left\lbrace \gamma_t \in E_t ~ \middle\vert ~ \gamma_t(0)=i, \, \gamma_t(t)=f, \, \tilde{a}[\gamma_t] = {\omega_n}^p \right\rbrace \right\vert \, ,
\end{equation}
where $\vert \cdot \vert$  is the cardinality of a set. In other words, $s^{t}_{ifp} (E)$ is the number of $t$-paths in $E_t$ which begin at $i$, end at $f$ and have phase ${\omega_n}^p$.

Recall that the quantum measure is
\begin{equation}
	\mu(E) = \sum \limits_{f=0}^{n-1} \left\vert \sum \limits_{\gamma_t \in E_t\vert_f} a[\gamma_t] \right\vert^2 
\end{equation}
for any $t\ge t_E$. The sum over $t$-path amplitudes in the formula for $\mu(E)$ can be 
re-expressed in terms of $s^{t}_{ifp} (E)$:
\begin{equation}
	\sum \limits_{\gamma_t \in E_t\vert_f} a[\gamma_t] = \frac{1}{n^{t/2}} \sum \limits_{i=0}^{n-1} \psi_i \sum \limits_{p =0}^{n'(n)-1} s^{t}_{ifp} (E) {\omega_n}^p\,.
\end{equation}
$E$ is null if and only if this sum is zero for each final site $f$.

We use the fact that the roots of unity sum to zero
\begin{equation}
	\sum \limits_{p=0}^{n-1} e\left(\frac{p}{n} \right) = 0
\end{equation}
to give the following  identities
\begin{eqnarray}
	\sum \limits_{p=0}^{n-1} {\omega_n}^p = 0  & \qquad \mathrm{for~odd~}n \label{sum_over_roots_of_unity1}
	\\ \sum \limits_{p=0,2}^{2n-2} {\omega_n}^p = \sum \limits_{m=0}^{n-1} e\left(\frac{2m}{2n} \right)  = 0  & \qquad \mathrm{for~even~}n \label{sum_over_roots_of_unity2}
	\\ \sum \limits_{p=1,3}^{2n-1} {\omega_n}^p = e\left(\frac{1}{2n}\right) \sum \limits_{m=0}^{n-1} e\left(\frac{2m}{2n} \right) = 0 & \qquad \mathrm{for~even~} n. \label{sum_over_roots_of_unity3}
\end{eqnarray}
Therefore, for odd $n$, $E$ will be null if, for each $i$ and $f$, the $s^t_{ifp}(E)$ are equal for all $p$. When $n$ is even, $E$ will be null if, for each $i$ and $f$, the $s^t_{ifp}(E)$ are equal when the $p$'s are of the same parity. However, this is trumped by the observation that ${\omega_n}^{p+n} = -{\omega_n}^{p}$ for even $n$. So $E$ will be null if $s^t_{ifp}(E)=s^t_{if(p+n)}(E)$ for all $i,f,p \in \mathbb{Z}_n$.

The events described above will be null no matter what the initial amplitudes are, but there can also be events that are null depending on the initial amplitudes. If $\psi_i = c z_i {\omega_n}^{p_i}$ for some fixed $c \in \mathbb{C}$ and integers $z_i$ and $p_i$ then
\begin{equation}
	\sum \limits_{ \gamma_t \in E_t\vert_f} a[\gamma_t] \propto  \sum \limits_{i=0}^{n-1} \sum \limits_{p=0}^{n'(n)-1} s^{t}_{ifp} (E) z_i {\omega_n}^{p+p_i}
\end{equation}
and we see that it is possible that $ \mu(E) =0$ even when the $s^{t}_{ifp}(E)$ are {not} equal for all $p$. A trivial example of this would be if $\psi_i=0$ for some $i$, then any event that only contains histories that start at $i$ would be null.

Let us look at some examples. 
For the $2$-site hopper, consider the event
\begin{equation}
	A = \cyl{\underbrace{000}_{p=0}} \cup \cyl{\underbrace{010}_{p=2}} \, .
\end{equation}
Then $t_A = 2$, 
\begin{equation}
	s^2_{000}(A)=s^2_{002}(A)=1
\end{equation}
and all other $s^{2}_{ifp}(A) = 0$. So $A$ is null for any initial amplitudes. 

For the $5$-site hopper, the event
\begin{equation}
	 B = \cyl{\underbrace{01203}_{p=0}} \cup  \cyl{\underbrace{00103}_{p=1}} \cup  \cyl{\underbrace{00203}_{p=2}} \cup  \cyl{\underbrace{00123}_{p=3}} \cup  \cyl{\underbrace{00003}_{p=4}}
\end{equation}
has $t_B = 4$ and 
\begin{equation}
	s^4_{03p}(B)=1 \quad \forall ~ p \in \mathbb{Z}_{5}
\end{equation}
with all other $s^{4}_{ifp}(B)  = 0$.  So $B$ is null for any initial amplitudes.  

As a final example, in the $3$-site hopper the event
\begin{equation}
	C = \cyl{\underbrace{000}_{p=0}} \cup  \cyl{\underbrace{010}_{p=2}} \cup \cyl{\underbrace{120}_{p=2}}
\end{equation} 
has $t_C = 2$ and 
\begin{equation}
	s^2_{000}(C)=s^2_{002}(C)=s^2_{102}(C)=1\,,
\end{equation}
with all other $ s^2_{ifp} (C)=0$. 
Let the initial amplitudes satisfy $\psi_0 = \omega_3 \psi_1$. The only final site 
for the  2-paths is $f=0$. So the relevant sum is
\begin{eqnarray}
	\sum \limits_{ \gamma_2 \in C_2\vert_0} a[\gamma_2]& \propto 
	\sum_{ip} \psi_i s^2_{i 0 p} \omega_3^p
	\\ & =\psi_1 s^2 _{102} \omega_3^2 +  \psi_0 s^2 _{000} +  \psi_0 s^2 _{002} \omega_3^2
	\\ & =\psi_1(\omega_3^2 + \omega_3 +1)
	\\ & = 0
\end{eqnarray}
and so $C$ is null. 

\section{Nothing Finite Happens in the Multiplicative Scheme} \label{Nothing_Happens}

\subsection{Multiplicative Co-events}

The set of all events, $\mathfrak{A}$,  for a quantum measure system has a natural Boolean structure inherited  from its being a subset of the power set of $\Omega$. In QMT, possible physical worlds correspond to  ``answering'' maps 
\begin{equation}
	\phi: \mathfrak{A}\rightarrow \mathbb{Z}_2 = \{0,1\}\,.
\end{equation} 
If $\phi(A) = 1$ ($\phi(A) = 0$)  we say that event $A$ is affirmed (denied) by the world $\phi$. Equivalently we can say that $A$ happens (does not happen) in the world $\phi$. 

Such a map, $\phi$, is called a \textit{co-event} \cite{2007_co-events_Rafael_Sorkin,2012_logic_to_quantum_Rafael_Sorkin,2013_coevent_formulation_Petros_Wallden} as a reminder that the map takes events to scalars, one or zero. A co-event is a possible physical world since it gives an answer to every question that can be asked about the system, once the set of histories is fixed. One might say, ``the world is everything that is the case and nothing that is not the case,''  echoing Wittgenstein.

A co-event is considered classical if it preserves the Boolean structure of $\mathfrak{A}$. In particular, it must be the case that the event ``$A$ and $B$'' is affirmed iff $A$ is affirmed and $B$ is affirmed, that ``$A$ xor $B$'' is affirmed iff $A$ is affirmed or $B$ is affirmed but not both, and that ``nothing''$=\emptyset$ is denied and ``something''$=\Omega$ is affirmed. Thus the classical co-events follow our intuitive logic. Moreover, for finite $\Omega$, it can be shown that each classical co-event can be uniquely associated with a single history $\gamma \in \Omega$, where physically $\gamma$ is \textit{the} history that happens. For example, for a ball passing through a left or a right slit, a classical co-event could be associated with the history $\gamma=$`right slit', which would affirm the event ``the ball passes through the right slit'', deny the event ``the ball passes through the left slit'' and affirm the event ``the ball passes through the right or left slit''.

Now, the Law of Preclusion implies that a possible co-event, $\phi$, must satisfy the condition 
\begin{equation}
	 \forall ~ A\in \mathfrak{A}, \quad \mu(A) = 0 ~ \Rightarrow ~ \phi(A) = 0 \,.
\end{equation}
When this condition holds we say that $\phi$ is a \textit{preclusive} co-event. In classic stochastic theory, since $\mu$ would be a probability measure and therefore satisfies the Kolmogorov sum rule, it would always be possible to assign a classical co-event that is preclusive to the system, provided the associated history does not have zero probability. However, with a quantum measure it is possible to cover $\Omega$ with null events, which prevents the assignment of a preclusive classical co-event to the system \cite{2007_co-events_Rafael_Sorkin,2008_kochen_specker_Fay_Dowker_and_Yousef_Ghazi-Tabatabai}.

Instead, in QMT we seek a more general framework for choosing co-events using the law of preclusion. However, this condition is not strong enough to give us all the predictions we want to be able to make. In particular, in the special case that the measure $\mu$ is classical we want to recover the usual classical co-events. To achieve this, more conditions on the possible co-events are required. We refer to a complete set of conditions to be imposed on co-events that defines the full set of possible worlds as a \textit{scheme}. 

In the \textit{multiplicative scheme} \cite{2007_anhomomorphic_logic_Rafael_Sorkin}, the most closely studied scheme to date \cite{2013_3_site_energy_Rafael_Sorkin, 2012_logic_to_quantum_Rafael_Sorkin, 2008_kochen_specker_Fay_Dowker_and_Yousef_Ghazi-Tabatabai}, the possible co-events satisfy
\begin{equation}
	\phi(A\cap B) = \phi(A)\phi(B) \quad \forall ~A, B \in \mathfrak{A}\,.
\end{equation}
The scheme is attractive for several reasons. It can be shown that the multiplicative condition is, essentially, equivalent to the rule of inference known as modus ponens \cite{2012_modus_ponens_Fay_Dowker_and_Kate_Clements_and_Petros_Wallden}. It can also be shown that, in systems with finite $\Omega$, each multiplicative co-event can be uniquely associated with a subset of $\Omega$, know as the \textit{support} of the co-event. This gives the co-events a corresponding partial order, making it natural to add to the scheme the postulate that the allowed co-events are \textit{minimal} in this order. If the measure is classical, then the minimal multiplicative co-events  correspond to single histories in $\Omega$, resulting in classical co-events.  Further, Sorkin has shown that a minimal multiplicative co-event will be classical when restricted to the algebra of macroscopic events with permanent records. So,  within the multiplicative scheme, ``to the extent that one is willing to posit the existence of sufficiently permanent records of macroscopic events, one can [...] regard the measurement problem as solved'' \cite{2012_logic_to_quantum_Rafael_Sorkin}.

Whilst the multiplicative scheme has enjoyed these successes, the previous history spaces it had been applied to were small and had small fixed cut-off times. This is not the case with the $n$-site hopper model presented in this paper, which allows us to expose a problem with the multiplicative scheme. Any future scheme developed for QMT would need to take this problem into account.

\subsection{A ``Problem of Time''}

In a classical theory, if an event $A$ is denied by the physical world (does not happen), then any subevent of $A$ 
is also denied. This property of the physical world also holds in the multiplicative scheme:
\begin{lemma} \label{stymie} 
	Let $\phi$ be a multiplicative co-event and $F$ be an event. If $\phi(F) =0$ and $E\subseteq F$, then $\phi(E) = 0$. 
\end{lemma}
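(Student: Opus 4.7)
The plan is to exploit the multiplicative property directly, using the set-theoretic identity that captures the hypothesis $E \subseteq F$. The key observation is that containment can be rewritten as an intersection: $E \subseteq F$ if and only if $E \cap F = E$. This turns a statement about the Boolean order on $\mathfrak{A}$ into a statement about the operation ($\cap$) that the co-event is assumed to respect multiplicatively.

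Once this rewriting is in hand, the proof collapses to a single application of the multiplicative condition. First I would note that $E, F \in \mathfrak{A}$, so $\phi(E)$, $\phi(F)$ and $\phi(E \cap F)$ are all defined. Then, applying $\phi$ to both sides of $E = E \cap F$ and using the defining property $\phi(A \cap B) = \phi(A)\phi(B)$ with $A = E$ and $B = F$, one obtains
\begin{equation}
\phi(E) \;=\; \phi(E \cap F) \;=\; \phi(E)\,\phi(F).
\end{equation}
Substituting the hypothesis $\phi(F) = 0$ into the right-hand side gives $\phi(E) = \phi(E) \cdot 0 = 0$, as required.

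There is essentially no obstacle here: the only thing to check is that the event algebra $\mathfrak{A}$ contains $E \cap F$, which is immediate since $E \cap F = E \in \mathfrak{A}$ by hypothesis. The lemma is a purely algebraic consequence of the multiplicative axiom and uses no structure specific to the $n$-site hopper, so no properties of the quantum measure, the decoherence functional, or the cylinder-set structure need to be invoked.
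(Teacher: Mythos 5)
Your proof is correct and is essentially identical to the paper's: both use $E\subseteq F \Rightarrow E = E\cap F$ and then apply the multiplicative condition to get $\phi(E) = \phi(E)\phi(F) = 0$. Nothing further is needed.
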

\begin{proof}
	\begin{eqnarray}
		\phi(E) &= \phi(E \cap F)
		\\ &= \phi(E)  \phi(F)
		\\ &= 0\,.
	\end{eqnarray}
\end{proof}
In a classical theory this property is familiar and causes no difficulty. In QMT,  quantum interference results in many more null events than in classical theory and the Law of Preclusion together with the above Lemma then vastly expands the set of events that \textit{must} be denied by \textit{all} allowed co-events. It is useful to have a term for this: if $\mu(A) = 0$, so that $A$ is precluded and $B\subset A$ and $\mu(B) \ne 0$,  we say that in the multiplicative scheme $B$ is \textit{stymied} by $A$.  The fact that in the multiplicative scheme there are events of nonzero measure that must be denied by all allowed co-events turns out to be problematic. 

In the $n$-site hopper, the system is extended arbitrarily in time into the future. The space of $t$-paths, $\Omega_t(n)$, grows exponentially with $t$ and the number of time-finite events grows as an exponential of an exponential. The store of events that could potentially stymie any given fixed time-finite event therefore grows at a huge rate. In particular the most striking result follows from this next theorem, which is a generalisation of a theorem given in \cite{2010_extending_quantum_measure_Fay_Dowker_and_Steven_Johnston_and_Sumati_Surya} for the $2$-site hopper.
\begin{theorem} \label{nirvana} For the $n$-site hopper with any choice of initial amplitudes, for every time-finite event $E$ where $E \not \supseteq \cyl{i} ~ \forall ~i$, there exists a time-finite event $F$ such that  $F \supseteq E$ and $\mu(F)=0$.
\end{theorem}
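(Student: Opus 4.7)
The root-of-unity identities from the preceding subsection give a sufficient condition for a time-finite event to be null for \emph{any} choice of initial amplitudes: for odd $n$, that $s^t_{ifp}$ is constant in $p$ for each $(i,f)$; for even $n$, that $s^t_{ifp}=s^t_{if(p+n)}$ for each $(i,f,p)$. I will call such an event \emph{phase-balanced}. The strategy is to fix a sufficiently large $t\ge t_E$, enumerate $E$ as a finite union of $t$-path cylinders, and adjoin further cylinders drawn from $\Omega_t(n)\setminus E_t$ so as to inflate the counts $s^t_{ifp}(E)$ block by block into a phase-balanced configuration. The resulting event $F$ is then a finite union of cylinders (hence time-finite), contains $E$ by construction, and has $\mu(F)=0$.

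\textbf{Feasibility and the key counting lemma.} Writing $N^t_{ifp}$ for the total number of $t$-paths from $i$ to $f$ with phase $\omega_n^p$, the construction succeeds provided that for every $(i,f,p)$,
\begin{equation*}
\max_{p'}s^t_{ifp'}(E)\le N^t_{ifp},
\end{equation*}
so that enough unused $t$-paths of each profile are available in $\Omega_t(n)\setminus E_t$. (For even $n$, the parity conservation $(a-b)^2\equiv a-b\pmod{2}$ implies $N^t_{ifp}=0$ unless $p\equiv i-f\pmod{2}$; blocks in the excluded parity class balance automatically with their partner because both $N^t_{ifp}$ and $s^t_{ifp}(E)$ vanish there.) The hypothesis $E\not\supseteq\cyl{i}$ translates at the defining time to $|E_{t_E}^{(i)}|\le n^{t_E}-1$, where $E_{t_E}^{(i)}:=\{\gamma_{t_E}\in E_{t_E}:\gamma_{t_E}(0)=i\}$; this strict deficit of one path per starting site is what ultimately drives the crucial ratio below~$1$. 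The heart of the proof is a Gauss-sum estimate showing that $N^t_{ifp}=n^{t-2}(1+o(1))$ whenever $N^t_{ifp}\ne 0$. Fourier inversion over $\mathbb{Z}_{n'(n)}$ gives $N^t_{ifp}=(1/n'(n))\sum_{\zeta}\zeta^{-p}\,H(\zeta)^t_{if}$, where the sum runs over the $n'(n)$-th roots of unity and $H(\zeta)_{ab}:=\zeta^{(a-b)^2}$. This circulant matrix has eigenvalues that are quadratic Gauss sums, and an eigenvalue of modulus $n$ forces $\zeta^2=1$; so the only dominant contributions come from $\zeta=1$ and, for even $n$, from the rank-one matrix $H(-1)$, while every other $\zeta$ contributes $O(r^t)$ with $r<n$.

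\textbf{Completion.} Any $t$-path in $E_t$ with profile $(i,f,p)$ extends a unique $t_E$-path in $E_{t_E}^{(i)}$, so applying the same equidistribution estimate to the $(t-t_E)$-step tails gives $s^t_{ifp}(E)\le|E_{t_E}^{(i)}|\,n^{t-t_E-2}(1+o(1))$. Dividing by $N^t_{ifp}\ge n^{t-2}(1+o(1))$,
\begin{equation*}
\frac{\max_{p'}s^t_{ifp'}(E)}{N^t_{ifp}}\le\frac{|E_{t_E}^{(i)}|}{n^{t_E}}(1+o(1))\le(1-n^{-t_E})(1+o(1)),
\end{equation*}
which is strictly below~$1$ uniformly in $(i,f,p)$ once $t$ is large enough. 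The balancing is therefore possible, yielding the required phase-balanced $F\supseteq E$ with $\mu(F)=0$. The principal obstacle is the spectral-radius / Gauss-sum estimate underpinning the equidistribution of $N^t_{ifp}$; once that is in hand the rest is a direct counting argument.
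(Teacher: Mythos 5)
Your proposal is correct, but it reaches the conclusion by a genuinely different route from the paper's. The paper fixes $t=t_E+4nm$ and uses the exact periodicity $U(n)^{4n}=\mathbf{1}$ (obtained from a Landsberg--Schaar evaluation of the eigenvalues of $U(n)$) to pull the $E$-contribution to the amplitude sum back to time $t_E$; it then balances the hybrid quantities $s^{t_E}_{ifp}(E)+n^{-(t-t_E)/2}s^{t}_{ifp}(G)$ over $p$, with feasibility supplied by an explicit zig-zag concatenation (Lemma~\ref{lower_s_bound}) giving $s^t_{ifp}(\bar{E})\ge c_i(\bar{E})\,n^{t-t_E-2n+1}$, which dwarfs the required $O(n^{(t-t_E)/2})$ quota. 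You instead balance the counts $s^t_{ifp}(F)$ directly at one large time, and your feasibility criterion $\max_{p'}s^t_{ifp'}(E)\le N^t_{ifp}$ rests on equidistribution of $N^t_{ifp}$, proved by Fourier inversion over the phase group and a spectral gap for the circulants $H(\zeta)$ with $\zeta^2\ne 1$ (your reduction of modulus-$n$ eigenvalues to $\zeta^2=1$ via the equality case of the triangle inequality is the right argument). Your reduction of the hypothesis $E\not\supseteq\cyl{i}$ to the density deficit $|E^{(i)}_{t_E}|/n^{t_E}\le 1-n^{-t_E}<1$ isolates cleanly where that hypothesis enters. What each buys: your route needs neither the zig-zag construction nor exact periodicity of $U(n)$, so it is more robust and would survive perturbations of the dynamics that destroy periodicity, at the price of writing out the Gauss-sum lemma and making the $(1+o(1))$ bookkeeping uniform over the finitely many $(i,f,p)$; the paper's argument is fully explicit, achieves exact cancellation at specific times, and reuses its periodicity lemma elsewhere (the continuum-limit discussion and the initial-amplitude-dependent stymieing). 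One small point to make explicit in a write-up: for even $n$ the tails counted in your bound $s^t_{ifp}(E)\le|E^{(i)}_{t_E}|\,n^{t-t_E-2}(1+o(1))$ are themselves parity-constrained, but since you only need an upper bound this costs nothing.
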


The consequence of Theorem \ref{nirvana} and Lemma \ref{stymie} for the multiplicative scheme is that all non-trivial 
time-finite events are stymied by a null time-finite event. The only 
time-finite events that are not always denied correspond to statements of the form ``the hopper starts at site $i$ or $\ldots$''. 
In fact, we shall see that for odd $n$ there are certain initial conditions for which the result can be extended to 
stymie these events as well.

We will now prove a number of lemmas before proving Theorem \ref{nirvana} and discussing the implications of this result for multiplicative schemes. Recall that
\begin{equation}
	n'(n):= \cases{
		2n &  for even $n$
		\\ n & for odd $n$.
	} 
\end{equation}
This will allow us to make statements that cover both even and odd $n$.

\begin{lemma} \label{zero_s_2}
	For the $n$-site hopper with even $n$, if a $t$-path with $t\ge 1$ begins at site $i$, ends at site $f$ and has phase ${\omega_n}^p$ where $p \in \mathbb{Z}_{2n}$, then 
	\begin{equation}
		\Emod{p+i+f}{0}{2} \, .
	\end{equation}
\end{lemma}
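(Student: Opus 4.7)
The plan is to reduce the statement to a telescoping sum modulo 2, using the basic observation that $k^2 \equiv k \pmod 2$ for any integer $k$.

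First I would write the phase $\omega_n^p$ of a $t$-path $\gamma_t$ explicitly using the definition given in the paper:
\begin{equation}
p \equiv \sum_{t'=0}^{t-1}\bigl(\gamma_t(t') - \gamma_t(t'+1)\bigr)^2 \pmod{2n}.
\end{equation}
Since $n$ is even, $2 \mid 2n$, so reduction modulo $2$ is well-defined on the integer $p \in \mathbb{Z}_{2n}$, and
\begin{equation}
p \equiv \sum_{t'=0}^{t-1}\bigl(\gamma_t(t') - \gamma_t(t'+1)\bigr)^2 \pmod{2}.
\end{equation}

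Next, I would use $k^2 \equiv k \pmod 2$ term by term, which turns the sum into a telescoping one:
\begin{equation}
p \equiv \sum_{t'=0}^{t-1}\bigl(\gamma_t(t') - \gamma_t(t'+1)\bigr) \equiv \gamma_t(0) - \gamma_t(t) \equiv i - f \pmod{2}.
\end{equation}
Here I am implicitly picking integer representatives for $\gamma_t(t') \in \mathbb{Z}_n$; this is legitimate because the parity of each representative is the same (since $n$ is even, so $2 \mid n$).

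Finally, since $-f \equiv f \pmod 2$, this gives $p + i + f \equiv 2(i+f) \equiv 0 \pmod 2$, which is the claim. There is no real obstacle here — the only subtlety worth flagging is that the reductions modulo $2$ (of $p$, and of the site labels) require the evenness of $n$, which is precisely the hypothesis. The key trick is simply the identity $k^2 \equiv k \pmod 2$, which converts the sum of squared differences into a telescoping sum.
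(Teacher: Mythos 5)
Your proof is correct, and it rests on the same essential fact as the paper's — that squaring preserves parity — but it routes through it more directly. The paper expands each $(\gamma_t(k)-\gamma_t(k+1))^2$ and regroups so that every interior term carries an explicit factor of $2$, obtaining $p = i^2+f^2+2[\ldots] \Modop 2n$, and only then reduces modulo $2$ and invokes $k^2\equiv k \pmod 2$ on the endpoints $i$ and $f$. You instead reduce modulo $2$ at the outset (legitimate since $2\mid 2n$), apply $k^2\equiv k \pmod 2$ to each squared difference, and let the resulting linear sum telescope to $i-f$. The two arguments are logically equivalent, but yours avoids the algebraic expansion and bookkeeping of the cross terms entirely; the paper's version has the minor advantage that its intermediate identity (\ref{path_phase_expansion}) is stated modulo $n'(n)$ for general $n$ and so is set up to be reused, whereas your reduction modulo $2$ commits immediately to the even case. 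Your side remark about choosing integer representatives is the right thing to flag and is handled correctly.
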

\begin{proof}
	Consider the $n$-site hopper for any $n$. Consider the $t$-path $\gamma_t \in \Omega_t(n)$ with phase $\tilde{a}[\gamma_t]={\omega_n}^p$ where $p \in \mathbb{Z}_{n'(n)}$. Then 
	\begin{eqnarray}
		\fl p &= \sum \limits_{k=0}^{t - 1} \left[\gamma_t(k) - \gamma_t(k+1)\right]^2 \Modop n'(n)
		\\ \fl &= \sum \limits_{k=0}^{t - 1} \left[\gamma_t(k)^2 + \gamma_t(k+1)^2 - 2\gamma_t(k) \gamma_t(k+1)\right] \Modop n'(n)
		\\ \fl &= \left[\gamma_t(0)^2 + \gamma_t(t)^2  - 2 \gamma_t(0)\gamma_t(1) + 2\sum \limits_{k=1}^{t - 1} \gamma_t(k) \left[ \gamma_t(k) - \gamma_t(k+1) \right]\right]\Modop n'(n)\,. \label{path_phase_expansion}
	\end{eqnarray}

	Now specialise to even $n$, so $n'(n) = 2n$, and fix $\gamma_t(0)=i$ and $\gamma_t(t)=f$. We have
	\begin{equation}
		p = i^2 + f^2 + 2 \left[ \ldots \right]\Modop 2n \, .
	\end{equation}
	Noting that the$\Modop 2n$ operation does not change the parity of the operand, we find
	\begin{equation}
		\Emod{p}{i^2+f^2}{2} 
	\end{equation}
	and, using the fact that the parity of an integer is equal to the parity of its square
	\begin{equation}
		 \Emod{p + i + f}{0}{2}\,.
	\end{equation}
\end{proof}
\begin{corollary}
For the $n$-site hopper with $n$ even, for any time-finite event $E$, if $p \in \mathbb{Z}_{2n}$ and $i, f \in Z_n$ satisfy 
$p + i + f \Modop 2 = 1$,  then $s^t_{ifp}(E) = 0 \quad \forall ~ t \ge t_E$. 
\end{corollary}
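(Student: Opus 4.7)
The corollary is essentially the contrapositive of Lemma \ref{zero_s_2}, packaged in terms of the counting function $s^t_{ifp}(E)$. My plan is to observe that, by definition,
\begin{equation}
s^t_{ifp}(E) = \bigl|\bigl\{\gamma_t \in E_t \mid \gamma_t(0)=i,\, \gamma_t(t)=f,\, \tilde{a}[\gamma_t]={\omega_n}^p\bigr\}\bigr|,
\end{equation}
so if any path contributes to $s^t_{ifp}(E)$ it is a $t$-path beginning at $i$, ending at $f$, with phase ${\omega_n}^p$. Applying Lemma \ref{zero_s_2} to such a path immediately forces $p+i+f \equiv 0 \pmod 2$. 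Since the hypothesis of the corollary asserts $p+i+f \equiv 1 \pmod 2$, no such path exists, and the counted set is empty.

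The only subtlety is the boundary case $t=0$, which arises if $t_E = 0$ (i.e.\ $E$ is a union of single-site cylinder sets). Lemma \ref{zero_s_2} is stated for $t \ge 1$, so I would handle $t=0$ separately: a $0$-path is a single site, which forces $i=f$, and its phase is the empty product $\tilde{a}[\gamma_0] = 1$, so a nonempty contribution to $s^0_{ifp}(E)$ requires $p=0$. In that case $p+i+f = 2i$ is even, contradicting the hypothesis; so again $s^0_{ifp}(E)=0$.

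There is no real obstacle here, since the corollary is a direct unpacking of the lemma. The only care needed is to make explicit that ``$s^t_{ifp}(E)=0$'' is equivalent to ``no $t$-path in $E_t$ has the given $(i,f,p)$ data,'' and to dispose of the $t=0$ edge case that falls outside the lemma's hypothesis.
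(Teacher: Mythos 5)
Your proposal is correct and matches the paper's (implicit) argument: the corollary is stated without proof precisely because it is the immediate contrapositive of Lemma \ref{zero_s_2} applied to each path counted by $s^t_{ifp}(E)$. Your extra handling of the $t=0$ edge case (where $i=f$ and the empty-product phase forces $p=0$, so $p+i+f$ is even) is a valid and slightly more careful touch than the paper provides, but it is not a different approach.
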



Given this result we can restrict the sum over $p$ in the measure calculation for even $n$ to exclude trivially zero $s^t_{ifp}(E)$. Let us define 
\begin{equation}
	\mathbb{Z}_{n,i,f} := \cases{
		\lbrace 0,2,\ldots,2n-2 \rbrace & if $n$ is even and $(i-f)$ is even
		\\ \lbrace 1,3,\ldots,2n-1 \rbrace & if $n$ is even and $(i-f)$ is odd
		\\ \mathbb{Z}_n & if $n$ is odd.
	}
\end{equation}
Then 
\begin{equation}
	\sum \limits_{p=0}^{n'(n)-1} s^t_{ifp}(E) \cdot (\ldots) = \sum \limits_{p \in \mathbb{Z}_{n,i,f}} s^t_{ifp}(E) \cdot (\ldots)\, .
\end{equation}

The next lemma shows that the number of paths in $E_t$, with fixed phase and initial and final sites, grows exponentially with $t$. 
\begin{lemma} \label{lower_s_bound}
	For $t \geq t_E + 2n-1$, $i,f \in \mathbb{Z}_n$ and $p \in \mathbb{Z}_{n,i,f}$, 
	\begin{equation}
		s^{t}_{ifp}(E) \geq c_i(E)n^{t-t_E - 2n + 1} \,,
	\end{equation}
	where $c_i(E)$ is the number of $t_E$-paths in $E_{t_E}$ that start at site $i$.
\end{lemma}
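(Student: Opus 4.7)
The plan is to establish the lower bound by an explicit construction. I decompose any candidate $t$-path $\gamma_t \in E_t$ satisfying $\gamma_t(0) = i$, $\gamma_t(t) = f$, and $\tilde{a}[\gamma_t] = \omega_n^p$ into three consecutive blocks in time: (a) an initial segment of length $t_E$ which must be one of the $c_i(E)$ many $t_E$-paths in $E_{t_E}$ starting at $i$; (b) a ``free'' middle segment of $m := t - t_E - (2n - 1) \geq 0$ further time steps, in which each intermediate site may be chosen arbitrarily in $\mathbb{Z}_n$, giving $n^m$ choices; and (c) a ``completing'' tail of exactly $2n - 1$ further time steps. Because $\cyl{\gamma^0} \subseteq E$ for every $\gamma^0 \in E_{t_E}$, every such extension automatically lies in $E_t$. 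Consequently, if each of the $c_i(E) \cdot n^m$ prefixes (a)+(b) admits at least one valid tail (c), the required bound $s^t_{ifp}(E) \geq c_i(E) \cdot n^{t - t_E - 2n + 1}$ follows at once.

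The weight of the argument therefore rests on the following sub-claim, which I would extract as an auxiliary lemma: for every $g, f \in \mathbb{Z}_n$ and every $r \in \mathbb{Z}_{n, g, f}$ there exists at least one path of length exactly $2n - 1$ from $g$ to $f$ with phase $\omega_n^r$. Applying this with $g$ the end-site of the prefix (a)+(b) and $r$ equal to $p$ minus the phase accumulated along that prefix (all mod $n'(n)$) produces the desired tail. A short parity count via Lemma~\ref{zero_s_2} shows that, for even $n$, this residual $r$ automatically satisfies $\Emod{r + g + f}{0}{2}$, so $r$ does lie in $\mathbb{Z}_{n, g, f}$ and the sub-claim applies uniformly.

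To establish the sub-claim I would build the completing tail in stages. Begin with a ``base'' path that first takes the shortest route from $g$ to $f$ in $c \leq \lfloor n/2 \rfloor$ unit steps around the ring and then ``stays'' at $f$ for the remaining $2n - 1 - c$ time slots; Lemma~\ref{zero_s_2} guarantees its base phase has the correct parity. One then modifies this base path by replacing any pair of consecutive stays at $f$ with a ``wiggle'' $f \to f + k \to f$, which preserves both endpoints and length but adds $2k^2$ (mod $n'(n)$) to the phase exponent. Further flexibility comes from wiggling at $g$ before the direct traversal, or from replacing a unit step in the direct segment with a longer hop to shift the base phase.

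The hard part, and the heart of the argument, is verifying that over all choices of number and type of wiggles (and base path) the set of achievable phases covers the entire allowed parity class: $\mathbb{Z}_n$ for odd $n$, and the even subgroup $2\mathbb{Z}_n \subset \mathbb{Z}_{2n}$ for even $n$. A $k = 1$ wiggle shifts by $+2$; for odd $n$, $\gcd(2, n) = 1$ makes $2$ a generator of $\mathbb{Z}_n$, while for even $n$ it generates exactly the required even subgroup. Counting shows $\lfloor (2n - 1 - c)/2 \rfloor$ wiggle slots are available, and the verification that these suffice to realize every allowed $r$ proceeds either by a uniform argument exploiting sum-of-squares identities or by direct case analysis, with larger wiggles ($k > 1$) and variations of the base path supplementing when pure $+2$ wiggles fall short.
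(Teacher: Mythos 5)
Your three-block decomposition and the resulting count $c_i(E)\cdot n^{t-t_E-2n+1}$ match the paper's overall strategy, and your parity bookkeeping for the residual phase is correct. But there is a genuine gap at exactly the point you flag as ``the hard part'': the sub-claim that every $r\in\mathbb{Z}_{n,g,f}$ is realised by some $(2n-1)$-step path from $g$ to $f$ is asserted, not proved. Your own accounting shows why it is not routine: the base traversal consumes up to $c\le\lfloor n/2\rfloor$ steps, leaving only $\lfloor(2n-1-c)/2\rfloor$ wiggle slots, while covering all $n$ phases of a parity class with $+2$ wiggles alone requires $n-1$ slots; the budget fails for essentially all $n\ge 4$ (e.g.\ even $n$ in the worst case gives $\lfloor 3n/4-1/2\rfloor<n-1$). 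The proposed remedies --- wiggles of amplitude $k>1$ contributing $2k^2$, or deformations of the base path --- would require you to show that the achievable sums $2\sum_j k_j^2 \Modop n'(n)$, subject to the slot budget, exhaust the parity class. That is a nontrivial number-theoretic statement (a bounded version of a sum-of-squares representation theorem) and it is the real content of the lemma, so it cannot be deferred to ``a uniform argument or direct case analysis.''

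The paper avoids this entirely by cutting the blocks slightly differently: the free middle segment is required to \emph{end at $f$} (its $t-t_E-2n+1$ interior sites are free, its final site is not), so the tail is a closed loop at $f$ of length $2(n-1)$. The prefix then runs from $i$ to $f$, its phase exponent $q$ lies in the same class $\mathbb{Z}_{n,i,f}$ as $p$, and the residual $(p-q)\Modop n'(n)$ automatically lies in $\left\lbrace 2k \Modop n'(n) \,\middle\vert\, k\in\mathbb{Z}_n\right\rbrace$ --- precisely the set realised by $k$ unit zig-zags $f\to f+1\to f$, of which up to $n-1$ fit in the $2(n-1)$ available steps. If you move the traversal to $f$ out of the tail and into the (already fully flexible) middle block, your argument closes with no number theory needed.
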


\begin{proof} 
	We will construct a path $\gamma_t \in E_t$, with phase $\tilde{a}[\gamma_t]={\omega_n}^p$, that starts at site $i$ and ends at site $f$, as a concatenation of three paths in three temporal regions as shown in figure \ref{path_growth_diagram}. The first region is from time $0$ to $t_E$, the second from $t_E$ to $t-2(n-1)$, and the third from $t-2(n-1)$ to $t$. Now, for a $t$-path $\gamma_t$ to be in $E_t$ it must match up with one of the $t_E$-paths $\gamma_{t_E} \in E_{t_E}$ for the first $t_E$ steps, but after this there are no restrictions on it. 

	Choose a site $j\in \mathbb{Z}_n$. In the first region, choose  $\gamma^1$ from among the paths in $E_{t_E}$ that start at site $i$ and end at site $j$. Let $c_{ij}(E)$ be the total number of possible choices for $\gamma^1$.
	
	In the second region, we choose  $\gamma^2$ to be any path starting at site $j$ at $t_E$ and ending at $f$ at $t-2(n-1)$. Between these two sites there are $(t-t_E-2n+1)$ time steps, and at each one the path could go to any one of the $n$ sites, so there are $n^{t-t_E-2n+1}$ such paths to choose from.
	
	In the third region $\gamma^3$ starts at site $f$ and ends at $f$, but we must choose it carefully such that  $\gamma_t$, which is the concatenation of $\gamma^1$, $\gamma^2$ and $\gamma^3$, will have a phase of $\tilde{a}[\gamma_t]={\omega_n}^p$. The phase is
	\begin{equation}
		\tilde{a}[\gamma_t]=\tilde{a}[\gamma^1]\tilde{a}[\gamma^2]\tilde{a}[\gamma^3]\,.
	\end{equation}
Since we have already chosen $\gamma^1$ and $\gamma^2$ we have fixed
	\begin{equation}
		\tilde{a}[\gamma^1]\tilde{a}[\gamma^2]={\omega_n}^{q}
	\end{equation}
	for some $q \in \mathbb{Z}_{n,i,f}$, since $\gamma^1$ starts at $i$ and $\gamma^2$ ends at $f$. Therefore, to get a $\gamma_t$ with the right phase we need to choose the path $\gamma^3$ such that its phase is
	\begin{equation}
		\tilde{a}[\gamma^3]={\omega_n}^{(p-q) \Modop n'(n)}\,.
	\end{equation}
	For odd $n$, $(p-q) \Modop n$ could be any number in $\mathbb{Z}_n$. For even $n$, since both $p$ and $q$ are in $\mathbb{Z}_{n,i,f}$, $(p-q) \Modop 2n$ could be any number in $\lbrace 0,2,\ldots,2(n-1) \rbrace$.
	
\begin{figure}[t!]
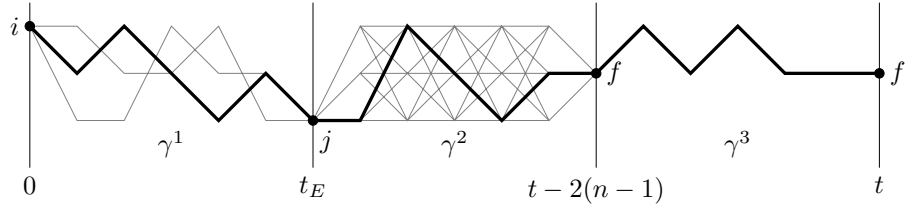

	\newlength{\timelength}
	\setlength{\timelength}{0.04\textwidth}
	\begin{indented}
		\item[] \tikz{
			\draw[line width=0.005\timelength,color=gray] (0,3\timelength) -- ++(\timelength,0) -- ++(\timelength,-\timelength) -- ++(\timelength,0) -- ++(\timelength,\timelength) -- ++(\timelength,-2\timelength) -- ++(\timelength,0)
			(0,3\timelength) -- ++(\timelength,-2\timelength) -- ++(\timelength,0) -- ++(\timelength,2\timelength) -- ++(\timelength,-\timelength) -- ++(\timelength,0) -- ++(\timelength,-\timelength)
			(6\timelength,\timelength) -- ++(\timelength,2\timelength) -- ++(\timelength,-2\timelength) -- ++(\timelength,2\timelength) -- ++(\timelength,-2\timelength) -- ++(\timelength,2\timelength)
			(7\timelength,\timelength) -- ++(\timelength,2\timelength) -- ++(\timelength,-2\timelength) -- ++(\timelength,2\timelength) -- ++(\timelength,-2\timelength)
			(7\timelength,\timelength) -- ++(\timelength,\timelength) -- ++(\timelength,-\timelength)-- ++(\timelength,\timelength) -- ++(\timelength,-\timelength) -- ++(\timelength,\timelength)
			(7\timelength,2\timelength) -- ++(\timelength,\timelength) -- ++(\timelength,-\timelength)-- ++(\timelength,\timelength) -- ++(\timelength,-\timelength)
			(6\timelength,\timelength) -- ++(\timelength,\timelength) -- ++(\timelength,-\timelength)-- ++(\timelength,\timelength) -- ++(\timelength,-\timelength) -- ++(\timelength,\timelength)
			(7\timelength,3\timelength) -- ++(\timelength,-\timelength)-- ++(\timelength,\timelength) -- ++(\timelength,-\timelength)-- ++(\timelength,\timelength) -- ++(\timelength,-\timelength)
			(7\timelength,3\timelength) -- (11\timelength,3\timelength)
			(7\timelength,2\timelength) -- (12\timelength,2\timelength)
			(7\timelength,\timelength) -- (11\timelength,\timelength);
			\draw[line width=0.07\timelength] (0,3\timelength) node[left]{$i$} -- ++(\timelength,-\timelength) -- ++(\timelength,\timelength) -- ++(2\timelength,-2\timelength)  -- ++(\timelength,\timelength) -- ++(\timelength, -\timelength) node[below,align=left]{$~~~j$} -- ++(\timelength,0) -- ++(\timelength,2\timelength) -- ++(2\timelength,-2\timelength) -- ++(\timelength,\timelength) -- ++(\timelength,0) node[right]{$f$} -- ++(\timelength,\timelength) -- ++(\timelength,-\timelength)  -- ++ (\timelength,\timelength) -- ++(\timelength,-\timelength) -- ++ (2\timelength,0) node[right]{$f$};
			\draw[line width=0.02\timelength,postaction={decorate,decoration={markings,mark=at position 0.5\timelength with {\filldraw circle (0.1\timelength)},mark=at position 6\timelength with {\filldraw circle (0.1\timelength)} ,mark=at position 8.5\timelength with {\filldraw circle (0.1\timelength)} ,mark=at position 12\timelength  with {\filldraw circle (0.1\timelength)}}}] (0,3.5\timelength) -- (0,0) node[below]{$0$} (6\timelength,3.5\timelength) -- (6\timelength,0) node[below]{$t_E$} (12\timelength,3.5\timelength) -- (12\timelength,0) node[below]{$t-2(n-1)$} (18\timelength,3.5\timelength) -- (18\timelength,0) node[below]{$t$};
			\node at (3\timelength,0.5\timelength) {$\gamma^1$} ;
			\node at (9\timelength,0.5\timelength) {$\gamma^2$} ;
			\node at (15\timelength,0.5\timelength) {$\gamma^3$} ;
		}
	\end{indented}
	\caption{A diagram of a $t$-path across three temporal regions. The times are written along the bottom, the height of the path specifies the site at that time. Drawn in bold is a specific path in $E_t$, in the first region on the left the path is given by $\gamma^1$ from site $i$ to site $j$, in the middle region it is given by $\gamma^2$ from site $j$ to site $f$, and in the final region it is given by $\gamma^3$ from site $f$ to site $f$. In grey are other choices for $\gamma_1$ and $\gamma_2$ that can be used to find a new path in $E_t$.}
	\label{path_growth_diagram}
\end{figure}

With this in mind, we construct $\gamma^3$ in the following way. It starts by zig-zagging between $f \rightarrow (f+1) \rightarrow f$ a number of times and ends by staying at $f$ for the remaining time steps. The number of time steps in region 3 allows for up to a maximum of $(n-1)$ zig-zags, the exact number of which will determine the amplitude of $\gamma^3$ in the following way
	\begin{eqnarray}
		\mathrm{no~zigzags:} &  \tilde{a}\left[\tikz{\draw (0ex,0ex) -- (6ex,0ex); \draw[dotted] (6ex,0ex) -- (9ex,0ex); \draw (9ex,0ex) -- (13ex,0ex); } \right]={\omega_n}^{0^2+0^2+\ldots+0^2}={\omega_n}^{0}
		\\ \mathrm{one~zigzag:}  & \tilde{a}\left[\tikz{\draw (0ex,0ex) -- (1ex,1ex) -- (2ex,0ex) -- (6ex,0ex); \draw[dotted] (6ex,0ex) -- (9ex,0ex); \draw (9ex,0ex) -- (13ex,0ex); } \right]={\omega_n}^{1^2+1^2+0^2+\ldots+0^2}={\omega_n}^{2}
		\\ \mathrm{two~zigzags:}  & \tilde{a}\left[\tikz{\draw (0ex,0ex) -- (1ex,1ex) -- (2ex,0ex) -- (3ex,1ex) -- (4ex,0ex) -- (6ex,0ex) ; \draw[dotted] (6ex,0ex) -- (9ex,0ex); \draw (9ex,0ex) -- (13ex,0ex); } \right]={\omega_n}^{4}
		\\  & \ldots \nonumber
		\\ n-2 \mathrm{~zigzags:}  & \tilde{a}\left[\tikz{\draw (0ex,0ex) -- (1ex,1ex) -- (2ex,0ex) -- (3ex,1ex) -- (4ex,0ex) -- (5ex,1ex) -- (6ex,0ex) ; \draw[dotted] (6ex,0ex) -- (6.75ex,0.75ex) ; \draw[dotted] (8.25ex,0.75ex) -- (9ex,0ex); \draw (9ex,0ex) -- (10ex,1ex) -- (11ex,0ex) -- (13ex,0ex); } \right]={\omega_n}^{2(n-2) \Modop n'(n)}
		\\  n-1 \mathrm{~zigzags:} \quad  & \tilde{a}\left[\tikz{\draw (0ex,0ex) -- (1ex,1ex) -- (2ex,0ex) -- (3ex,1ex) -- (4ex,0ex) -- (5ex,1ex) -- (6ex,0ex) ;  \draw[dotted] (6ex,0ex) -- (6.75ex,0.75ex) ; \draw[dotted] (8.25ex,0.75ex) -- (9ex,0ex); \draw (9ex,0ex) -- (10ex,1ex) -- (11ex,0ex) -- (12ex,1ex) -- (13ex,0ex); } \right]={\omega_n}^{2(n-1) \Modop n'(n)} \, .
	\end{eqnarray}
	Therefore, choosing such a form for $\gamma^3$, we are able to obtain any phase in
	\begin{equation}
		\left \lbrace {\omega_n}^{2k \Modop n'(n)} ~ \middle\vert ~ k \in \mathbb{Z}_n \right\rbrace = \cases{
			\left\lbrace 1,{\omega_n},\ldots,{\omega_n}^{n-1} \right\rbrace & for odd $n$
			\\ \left\lbrace 1,{\omega_n}^2,\ldots,{\omega_n}^{2(n-1)} \right\rbrace & for even $n$
		}
	\end{equation}
	including the phase ${\omega_n}^{(p-q)\Modop n'(n)}$ required to give $\gamma_t$ the phase we want.

	In conclusion, the number of $\gamma_t$ with phase $\omega_n^p$ that can constructed in this way equals
	\begin{equation}
		\sum \limits_{j=0}^{n-1} c_{ij}(E) n^{t-t_E-2n+1} \, .
	\end{equation}
	Defining $c_i(E)=\sum_j c_{ij}(E)$ we can place the lower bound on $s^t_{ifp}(E)$: the \textit{total} number of $t$-paths in $E_t$ that start at site $i$, end at site $f$, and have phase ${\omega_n}^p$
	\begin{equation}
		s^t_{ifp}(E) \geq c_{i}(E) n^{t-t_E-2n+1} \, .
	\end{equation}
\end{proof}

We will now prove a few lemmas that show that the unitary dynamics is periodic. 
\begin{lemma} \label{U_eigen_vectors_and_values}
	The matrix $U(n)$ for the $n$-site hopper has eigenvectors $[v_j]$, $j \in \mathbb{Z}_n$, with components
	\begin{equation}
		[v_j]_k = \frac{e(jk/n)}{\sqrt{n}}, \quad k \in \mathbb{Z}_n
	\end{equation}
 with corresponding eigenvalues
	\begin{equation}
		\lambda_j =	\cases{
			e\left(\frac{1}{8} \right) e\left(\frac{-j^2}{2n}\right) & for even $n$
			\\ e\left(\frac{-j^2}{4n}\right) & for $\Emod{n}{1}{4}$ and even $j$
			\\ \rmi \, e\left(\frac{-j^2}{4n}\right) & for $\Emod{n}{1}{4}$ and odd $j$
			\\ \rmi \, e\left(\frac{-j^2}{4n}\right) & for $\Emod{n}{3}{4}$ and even $j$
			\\ e\left(\frac{-j^2}{4n}\right) & for $\Emod{n}{3}{4}$ and odd $j$.
		}
	\end{equation}
\end{lemma}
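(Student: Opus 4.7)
The approach is to observe that $U(n)$ is circulant -- its entries depend only on $(j-k)\Modop n$ -- and hence is diagonalised by the discrete Fourier basis, which is exactly the proposed $[v_j]_k = e(jk/n)/\sqrt{n}$. To confirm the circulant property I would expand $\omega_n^{(j-k+n)^2} = \omega_n^{(j-k)^2}\,\omega_n^{2n(j-k)+n^2}$ and check that the trailing factor equals $1$ in both parities of $n$. Substituting $m = k - l$ in $(U(n)v_j)_k$ then reduces the eigenvalue to a quadratic Gauss sum:
\begin{equation}
\lambda_j = \frac{1}{\sqrt{n}} \sum_{m=0}^{n-1} \omega_n^{m^2}\, e(-jm/n).
\end{equation}

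For even $n$, I would complete the square in the exponent. This pulls out a factor $e(-j^2/(2n))$ and leaves $\sum_{m=0}^{n-1} e((m-j)^2/(2n))$. The summand has period $n$ in $m$ because $n/2\in\mathbb{Z}$, so the sum equals $\sum_{m=0}^{n-1} e(m^2/(2n))$, which is half of the classical Gauss sum $\sum_{m=0}^{2n-1} e(m^2/(2n)) = (1+\rmi)\sqrt{2n}$ (valid because $\Emod{2n}{0}{4}$). This evaluates to $\sqrt{n}\,e(1/8)$ and gives the advertised $\lambda_j = e(1/8)\,e(-j^2/(2n))$.

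The odd-$n$ case carries the bulk of the work. Since $\gcd(2,n)=1$, I can choose an integer $t$ with $2t \equiv j \Modop n$, taking $t = j/2$ if $j$ is even and $t = (j+n)/2$ if $j$ is odd. The shift $m = l + t$ converts the sum using $m^2 - jm \equiv l^2 - t^2 \Modop n$, giving
\begin{equation}
\lambda_j = \frac{e(-t^2/n)}{\sqrt{n}} \sum_{l=0}^{n-1} e(l^2/n),
\end{equation}
where the remaining factor is the classical quadratic Gauss sum -- $\sqrt{n}$ for $\Emod{n}{1}{4}$ and $\rmi\sqrt{n}$ for $\Emod{n}{3}{4}$. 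For even $j$, $e(-t^2/n) = e(-j^2/(4n))$ immediately; for odd $j$, expanding $(j+n)^2/4$ introduces the extra phase $-e(-n/4) = \pm \rmi$ depending on $n \Modop 4$. Checking each of the four combinations of parity and residue reproduces the four cases listed in the lemma.

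The main obstacle is the careful phase bookkeeping at the end of the odd-$n$ case: the Gauss-sum prefactor, the sign $e(-j/2) = -1$ that appears for odd $j$, and the factor $e(-n/4)$ all have to combine correctly in each of the four sub-cases. The underlying Gauss-sum evaluations themselves I would quote from the classical theory rather than reprove.
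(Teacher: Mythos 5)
Your proof is correct and lands on the same four-case answer, but it takes a genuinely different route through the Gauss-sum evaluation. Both you and the paper use the circulant structure of $U(n)$ to get the Fourier eigenvectors and reduce $\lambda_j$ to a quadratic Gauss sum with a linear term, $\frac{1}{\sqrt{n}}\sum_m \omega_n^{m^2}e(\mp jm/n)$ (your sign and the paper's agree because $\omega_n^{m^2}$ is invariant under $m\mapsto n-m$). The paper then disposes of the linear term by quoting the Landsberg--Schaar reciprocity relation for generalised Gauss sums $G(a,b,n)$ with half-integer parameters, which flips the sum into a one- or two-term sum $G(n/2,j,1)$ or $G(n/2,j,2)$ that is evaluated by hand. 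You instead eliminate the linear term directly --- completing the square modulo $2n$ for even $n$ (where the periodicity check $e(n/2)=1$ is exactly what makes the index shift legal), and solving $2t\equiv j \Modop n$ for odd $n$ --- and then quote only the classical pure Gauss sum $\sum_{m=0}^{N-1}e(m^2/N)$ for $N\equiv 0,1,3 \Modop 4$. The trade-off: the paper's reciprocity step treats the linear term uniformly but requires the heavier external input; your version needs only the standard Gauss-sum values but pushes the case analysis into the bookkeeping of $e(-t^2/n)$ for odd $j$, where the factors $e(-j/2)=-1$ and $e(-n/4)=(-\rmi)^n$ must combine with the Gauss-sum prefactor $1$ or $\rmi$ --- I checked all four sub-cases and they do reproduce the lemma, including the consistency of the formula under $j\mapsto j+n$.
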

\begin{proof}
	First note that
	\begin{eqnarray}
		U(n)_{jk} &= \frac{{\omega_n}^{(j-k)^2}}{\sqrt{n}}
		\\ &= \frac{{\omega_n}^{((j\pm 1)-(k\pm 1))^2}}{\sqrt{n}}
		\\ &= U(n)_{(j\pm 1)(k \pm 1)}\,.
	\end{eqnarray}
	Therefore $U(n)$ is a circulant matrix. All $n\times n$ circulant matrices have orthonormal eigenvalues $v_j$ of the form \cite{2006_circulant_matrices_Robert_Gray}
	\begin{equation}
		[v_j]_k = \frac{e(jk/n)}{\sqrt{n}} \, ,
	\end{equation}
	where $j,k$ run through $\mathbb{Z}_n$. Now, we will use this to derive the eigenvalues
	\begin{eqnarray}
		\fl \sum \limits_{k=0}^{n-1} U(n)_{k'k} [v_j]_k &= \frac{1}{n} \sum \limits_{k=0}^{n-1} {\omega_n}^{(k-k')^2} e(jk/n)
		\\ \fl &= \frac{1}{n} \sum \limits_{k=-k'}^{n-k'-1} {\omega_n}^{k^2} e(jk/n) e(jk'/n)
		\\ \fl &= \frac{1}{\sqrt{n}} \left[ \sum \limits_{k=n-k'}^{n-1} {\omega_n}^{(k-n)^2} e(j(k-n)/n)  + \sum \limits_{k=0}^{n-k'-1} {\omega_n}^{k^2} e(jk/n) \right] [v_j]_{k'} \, .
	\end{eqnarray}
	But ${\omega_n}^{n^2-2nk}=1$ for both even and odd $n$, and $e(-j)=1$ so
	\begin{equation}
		\sum \limits_{k=0}^{n-1} U(n)_{k'k} [v_j]_k= \frac{1}{\sqrt{n}} \sum \limits_{k=0}^{n-1} {\omega_n}^{k^2} e(jk/n) \, [v_j]_{k'} \, .
	\end{equation}
	This implies	
	\begin{equation}
		\lambda_j = \frac{1}{\sqrt{n}} \cases{
			\sum \limits_{k=0}^{n-1} e\left(\frac{k^2 + 2jk }{2n} \right) & for even $n$
			\\ \sum \limits_{k=0}^{n-1} e\left(\frac{k^2 + jk }{n} \right) & for odd $n$.
		}
	\end{equation}
	The final sums are Gauss sums, which are number theoretic sums of the form
	\begin{equation}
		G(a,b,n) := \sum \limits_{k=0}^{n-1} e\left( \frac{ak^2 +bk}{n} \right) \, ,
	\end{equation}
	where $n$ is a natural number and $a$ and $b$ are usually integers or half-integers. Now, we will use the Landsberg-Schaar relation for Gauss sums \cite{1981_gauss_sums_Bruce_Berndt_and_Ronald_Evans}. For any $a,n \in \mathbb{N}$ and $b \in \mathbb{Z}$, if $an+b$ is even then
	\begin{equation}
		G\left( \frac{a}{2}, \frac{b}{2}, n \right) = \sqrt{\frac{n}{a}} e\left( \frac{1}{8} \right) e\left( \frac{-b^2}{8an} \right) G\left(\frac{n}{2}, \frac{b}{2}, a \right)^* \, .
	\end{equation}
	For a proof to a similar relation that can be easily generalised to this see \cite{2006_number_theory_William_Coppel}.

	Now, for even $n$ we have
	\begin{eqnarray}
		\lambda_j &= \frac{1}{\sqrt{n}} G\left(\frac{1}{2},j,n\right)
		\\ &= e\left( \frac{1}{8} \right) e\left( \frac{-j^2}{2n} \right) G\left(\frac{n}{2}, j, 1 \right)^*
		\\ &= e\left( \frac{1}{8} \right) e\left( \frac{-j^2}{2n} \right) \, .
	\end{eqnarray}
	For odd $n$ we have
	\begin{eqnarray}
		\lambda_j &= \frac{1}{\sqrt{n}} G\left(1,j,n\right)
		\\ &= \frac{1}{\sqrt{2}} e\left( \frac{1}{8} \right) e\left( \frac{-j^2}{4n} \right) G\left(\frac{n}{2}, j, 2 \right)^*
		\\ &=  \frac{1}{\sqrt{2}} e\left( \frac{1}{8} \right)  \left( 1 + e \left(\frac{n}{4} + \frac{j}{2} \right) \right)^* e\left( \frac{-j^2}{4n} \right)
		\\ &= \cases{
			e\left(\frac{-j^2}{4n}\right) & for $\Emod{n}{1}{4}$ and even $j$
			\\ \rmi \, e\left(\frac{-j^2}{4n}\right) & for $\Emod{n}{1}{4}$ and odd $j$
			\\ \rmi \, e\left(\frac{-j^2}{4n}\right) & for $\Emod{n}{3}{4}$ and even $j$
			\\ e\left(\frac{-j^2}{4n}\right) & for $\Emod{n}{3}{4}$ and odd $j$.
		}
	\end{eqnarray}
\end{proof}

\begin{lemma} 
	\textit{}
	\begin{eqnarray}
		U(n)^{2n} = \cases{
			\mathbf{1} & for $\Emod{n}{0}{4}$ 
			\\ -\mathbf{1} & for $\Emod{n}{2}{4}$\,,
		}
		\\ U(n)^n = \cases{ 
			\mathbf{1} & for $\Emod{n}{1}{4}$ 
			\\ -\rmi\mathbf{1} & for $\Emod{n}{3}{4}$\,.
		}
	\end{eqnarray}
\end{lemma}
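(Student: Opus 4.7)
My plan is to reduce the claim to a spectral calculation using the previous lemma. Since $U(n)$ is circulant (hence normal) and Lemma \ref{U_eigen_vectors_and_values} gives an orthonormal eigenbasis $\{[v_j]\}_{j\in\mathbb{Z}_n}$ with eigenvalues $\lambda_j$, we have $U(n)^m = \sum_j \lambda_j^m\, [v_j][v_j]^{\dagger}$. Therefore $U(n)^m = c\,\mathbf{1}$ if and only if $\lambda_j^m = c$ for \emph{every} $j\in\mathbb{Z}_n$, reducing each of the four identities to the verification of a single scalar identity for all $j$.

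For the two even-$n$ cases, I would take $\lambda_j = e(1/8)\,e(-j^2/(2n))$ from Lemma \ref{U_eigen_vectors_and_values} and raise to the $2n$-th power. The $e(-j^2/(2n))$ factor becomes $e(-j^2) = 1$ since $j^2$ is an integer, leaving $\lambda_j^{2n} = e(2n/8) = e(n/4)$, which is independent of $j$. Splitting on $n \bmod 4$ then immediately gives $+1$ or $-1$ as claimed.

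For the odd-$n$ cases, I would raise to the $n$-th power. The common factor $e(-j^2/(4n))$ yields $e(-j^2/4)$, which depends only on the parity of $j$: it equals $1$ when $j$ is even, and (since $(2k+1)^2 \equiv 1 \pmod 8$) equals $e(-1/4) = -\rmi$ when $j$ is odd. Combined with the parity-dependent prefactor ($1$ or $\rmi$ from the lemma) raised to the $n$-th power, and using $\rmi^n = \rmi$ for $n\equiv 1\pmod 4$ and $\rmi^n = -\rmi$ for $n\equiv 3 \pmod 4$, the four subcases (two values of $n\bmod 4$ times two parities of $j$) each collapse to the same $j$-independent value: $+1$ in the $n\equiv 1\pmod 4$ case and $-\rmi$ in the $n\equiv 3\pmod 4$ case.

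The main obstacle is purely bookkeeping — there is no conceptual difficulty once the eigenvalues from Lemma \ref{U_eigen_vectors_and_values} are in hand. The only point requiring care is the odd-$n$ verification, where one must check that the parity-dependent prefactor from $\lambda_j$ and the parity-dependent factor $e(-j^2/4)$ combine so as to cancel the $j$-dependence; the identity $\rmi \cdot (-\rmi) = 1$ (respectively $(-\rmi)\cdot(-\rmi) = -1$ composed with the even-$j$ case) is what makes all four sub-cases agree. Once those scalar identities are verified, the lemma follows immediately from the spectral representation.
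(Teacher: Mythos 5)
Your proposal is correct and follows essentially the same route as the paper: diagonalise $U(n)$ with the circulant eigenbasis from Lemma \ref{U_eigen_vectors_and_values}, observe that $U(n)^N = c\,\mathbf{1}$ precisely when $\lambda_j^N = c$ for every $j$, and then check the four cases of $n \bmod 4$ (using $e(-j^2)=1$ for even $n$ and the parity-dependent cancellation $\rmi^n\, e(-j^2/4)$ for odd $n$). The scalar verifications you outline match the paper's case-by-case computation, so the proof is complete as planned.
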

\begin{proof}
	First, notice that we can use the eigenvectors from Lemma \ref{U_eigen_vectors_and_values} to form unitary matrix $[V]$ with components $[V]_{ij}=[v_i]_j$ that diagonalises $U(n)$. Therefore, if $\exists$ $N\in \mathbb{N}$
	such that ${\lambda_j}^N = \lambda ~ \forall ~ j$,  then $U(n)^N = \lambda \mathbf{1}$.
	
	Now, for $n = 4m$, $m \in \mathbb{N}$
	\begin{eqnarray}
		{\lambda_j}^{2n} &= \left(\frac{1+\rmi}{\sqrt{2}}\right)^{8m} \, e\left(- j^2\right)
		\\ &= 1 \, .
	\end{eqnarray}
	For $n = 4m+2$, $m \in \mathbb{N}$
	\begin{eqnarray}
		{\lambda_j}^{2n} &= \left(\frac{1+\rmi}{\sqrt{2}}\right)^{8m+4} \, e\left(- j^2\right)
		\\ &= -1 \, .
	\end{eqnarray}
	For $n = 4m+1$, $m \in \mathbb{N}$
	\begin{eqnarray}
		{\lambda_j}^{n} &= \cases{
			e\left(\frac{-j^2}{4}\right) & for even $j$
			\\ \rmi^{4m+3} e\left(\frac{- j^2}{4}\right) & for odd $j$
		}
		\\ &= 1 \, .
	\end{eqnarray}
	For $n = 4m+3$, $m \in \mathbb{N}$
	\begin{eqnarray}
		{\lambda_j}^{n} &= \cases{
			\rmi^{4m+3} \, e\left(\frac{-j^2}{4}\right) & for even $j$
			\\ e\left(\frac{-j^2}{4}\right) & for odd $j$
		}
		\\ &= -\rmi \, .
	\end{eqnarray}
\end{proof}

\begin{corollary} \label{U_unity}
	For all $n$ site hoppers
	\begin{equation}
		U(n)^{4n} = \mathbf{1}\,.
	\end{equation}
\end{corollary}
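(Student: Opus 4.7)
The corollary follows almost immediately from the preceding lemma by case analysis on $n \Modop 4$, since that lemma already resolves $U(n)^{2n}$ and $U(n)^n$ in each of the four residue classes. The plan is to raise each of those identities to a suitable power so that the overall exponent becomes $4n$ and the prefactor becomes $+\mathbf{1}$.

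More concretely, for $\Emod{n}{0}{4}$ I would write $U(n)^{4n} = (U(n)^{2n})^2 = \mathbf{1}^2 = \mathbf{1}$, and for $\Emod{n}{2}{4}$ similarly $U(n)^{4n} = (U(n)^{2n})^2 = (-\mathbf{1})^2 = \mathbf{1}$. For the odd cases I would instead use the $U(n)^n$ identities: when $\Emod{n}{1}{4}$, $U(n)^{4n} = (U(n)^n)^4 = \mathbf{1}^4 = \mathbf{1}$, and when $\Emod{n}{3}{4}$, $U(n)^{4n} = (U(n)^n)^4 = (-\rmi\mathbf{1})^4 = \mathbf{1}$ since $(-\rmi)^4 = 1$. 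These four cases exhaust all positive integers $n$, so the identity holds for every $n$-site hopper.

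There is essentially no obstacle here; the corollary is a bookkeeping consequence of the previous lemma. The only point requiring minimal care is the $\Emod{n}{3}{4}$ case, where one should verify that $(-\rmi)^4 = 1$ (rather than making an off-by-one slip and getting $(-\rmi)^2 = -1$), but this is immediate. A one- or two-line proof simply listing the four cases and the resulting computation suffices.
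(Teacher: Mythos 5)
Your proof is correct and matches the paper's (implicit) reasoning: the paper states this as an immediate corollary of the preceding lemma, and squaring the $U(n)^{2n}$ identities for even $n$ while taking fourth powers of the $U(n)^n$ identities for odd $n$ is exactly the intended derivation. The four residue classes mod $4$ cover all $n>1$, and each yields $\mathbf{1}$ as you compute.
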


We will now prove Theorem \ref{nirvana}. 
\begin{proof}[Proof of Theorem \ref{nirvana}]
	Let $E$ be a time-finite event for the $n$-site hopper such that $E \not \supseteq \cyl{i} ~ \forall ~i$. Let $ {\bar{E}}: = \Omega (n)\setminus E$ be the complement of $E$. Note that $\bar{E}$ is a time-finite event and $t_{\bar{E}} = t_E$. We will construct a superset  of $E$, $F=E \cup G$, where $G \subseteq {\bar{E}}$, such that $\mu(F)=0$. We will do this by choosing a $t > t_E$ large enough and selecting $t$-paths from $\bar{E}_t$ with amplitudes that cancel the amplitudes in $\mu(E)$. $G$ is then the union of the cylinder sets of these selected $t$-paths. 

	We have $E \cap G = \emptyset$ so for any $t>t_E$
	\begin{equation}
		\mu(E \cup G) =  \sum \limits_{f=0}^{n-1} \left\vert \sum \limits_{ \gamma_t \in E_t\vert_f} a[\gamma_t] +  \sum \limits_{ \gamma_t \in G_t\vert_f} a[\gamma_t]\right\vert^2\,. \label{measure_E_cup_G}
	\end{equation}
	Notice that
	\begin{eqnarray}
		\sum \limits_{ \gamma_{t} \in E_t\vert_f} a[\gamma_{t}] &= \sum \limits_{ \gamma_{t-1} \in E_{t-1}} a[\gamma_{t-1}f]
		\\ &= \sum \limits_{j=0}^{n-1} \sum \limits_{ \gamma_{t-1} \in E_{t-1} \vert_j} a[\gamma_{t-1}]U(n)_{jf}\,, \label{sum_over_amplitudes_under_time_transform}
	\end{eqnarray}
	where we used that $E_{t-1}$ still contains all the histories needed to reproduce $E$ because $(t-1) \geq t_E$. Now, applying (\ref{sum_over_amplitudes_under_time_transform}) iteratively to (\ref{measure_E_cup_G}) gives
	\begin{eqnarray}
		\mu(E \cup G)= \sum \limits_{f=0}^{n-1} \left\vert  \sum \limits_{j=0}^{n-1} \left[U(n)^{t-t_E}\right]_{jf} \sum \limits_{ \gamma_{t_E} \in E_{t_E}\vert_j} a[\gamma_{t_E}]  +  \sum \limits_{ \gamma_t \in G_t\vert_f} a[\gamma_t]\right\vert^2\,.
	\end{eqnarray}
	Then we choose $t = t_E + 4nm$ for some $m \in \mathbb{N}$, so we can use corollary \ref{U_unity} to replace $U(n)^{t-t_E}$ with the identity:
	\begin{eqnarray}
		\mu(E \cup G)&= \sum \limits_{f=0}^{n-1} \left\vert \sum \limits_{\gamma_{t_E} \in E_{t_E}\vert_f} a[\gamma_{t_E}] +  \sum \limits_{ \gamma_t \in G_t\vert_f} a[\gamma_t]\right\vert^2
		\\ &= \frac{1}{n^{t_E}} \sum \limits_{f=0}^{n-1} \left\vert \sum \limits_{i=0}^{n-1} \psi_i \sum \limits_{p \in \mathbb{Z}_{n,i,f}} \left[ s^{t_E}_{ifp}(E) + \frac{s^t_{ifp}(G)}{n^{(t-t_E)/2}} \right]{\omega_n}^p \right\vert^2\,. \label{measureG}
	\end{eqnarray}
	
	The number of paths in $E_t$ that start at site $i$, end at site $f$ and have phase ${\omega_n}^p$ is bounded from below by Lemma \ref{lower_s_bound}
	\begin{equation}
		s^t_{ifp}(\bar{E}) \geq  c_i(\bar{E}) n^{t-t_E-2n+1} \,.
	\end{equation}
	Since $E \not \supseteq \cyl{i}$ for any initial site $i$, $\bar{E}$ must contain paths that starts at site $i$, so $c_i(\bar{E}) \geq 1$. Therefore, for every $i,f \in \mathbb{Z}_n$ and $p \in \mathbb{Z}_{n,i,f}$, $s^t_{ifp}(\bar{E})$ will scale at least as fast $n^t$. In particular, this means there will exist a finite $m$, and therefore a finite $t=t_E+4nm$, such that
	\begin{equation}
		s^t_{ifp}(\bar{E}) \geq  n^{(t-t_E)/2} \left( \max \limits_{q\in \mathbb{Z}_{n,i,f}} s^{t_E}_{ifq}(E) - s^{t_E}_{ifp}(E) \right) \,.
	\end{equation}
	Given such a $t$, since $G$ can be any subset of $\bar{E}$, we can choose unique paths in $\bar{E}_t$ that start at site $i$ and end at site $f$ with phase ${\omega_n}^p$ to construct a $G_t$ such that
	\begin{equation}
		s^t_{ifp}(G) = n^{(t-t_E)/2} \left( \max \limits_{q\in \mathbb{Z}_{n,i,f}} s^{t_E}_{ifq}(E) - s^{t_E}_{ifp}(E) \right) 
	\end{equation}
	for all $i,f \in \mathbb{Z}_n$ and $p \in \mathbb{Z}_{n,i,f}$.
	Note that the right hand side is a non-negative integer so this is allowed.
	Substituting $s^t_{ifp}(G)$ into (\ref{measureG}) gives
	\begin{eqnarray}
		 \mu(E \cup G)
		&= \frac{1}{n^{t_E}} \sum \limits_{f=0}^{n-1} \left\vert \sum \limits_{i=0}^{n-1} \psi_i \, \max \limits_{q\in\mathbb{Z}_{n,i,f}} s^{t_E}_{ifq}(E) {\sum \limits_{p \in \mathbb{Z}_{n,i,f} } {\omega_n}^p } \right\vert^2
		\\ &= 0 \, ,  
	\end{eqnarray}
	where we used (\ref{sum_over_roots_of_unity1}), (\ref{sum_over_roots_of_unity2}) and (\ref{sum_over_roots_of_unity3}) in the last step.
\end{proof}
	
\subsection{Null events dependent on the initial amplitudes}

We will now show that, for certain initial amplitudes, it is even possible to stymie 
``initial position'' events, i.e. events $E$ where $E \supseteq \cyl{i}$ for some $i$ 
for which $\psi_i \ne 0$. 

 We first define the following set of initial sites
\begin{equation}
	I(E) :=\left\lbrace i ~ \middle\vert ~ E \supseteq \cyl{i} \right\rbrace\,.
\end{equation}
For this scenario we will split our event $E$ into the subsets $E^{(i)}$ defined by
\begin{equation}
	E^{(i)} := \left\lbrace \gamma \in E ~ \middle\vert ~ \gamma(0)=i \right\rbrace\,.
\end{equation}
Note that for $i \in I(E)$, $E^{(i)}=\cyl{i}$, and thus $t_{E^{(i)}}=0$. Therefore for all $t \geq t_E$
\begin{eqnarray}
	\sum \limits_{\gamma_t \in E_t\vert_f} a[\gamma_t] &= \sum \limits_{i \in I(E)} \sum \limits_{\gamma_t \in {E^{(i)}_t\vert_f}} a[\gamma_t] \, + \, \sum \limits_{i \not \in I(E)} \sum \limits_{\gamma_t \in {E^{(i)}_t\vert_f}} a[\gamma_t]
	\\ &= \sum \limits_{i \in I(E)} \sum \limits_{j=0}^{n-1} \left[ U(n)^{t} \right]_{jf} \sum \limits_{\gamma_0 \in {E^{(i)}_0\vert_f}} a[ \gamma_0] \, + \, \sum \limits_{i \not \in I(E)} \sum \limits_{\gamma_t \in {E^{(i)}_t\vert_f}} a[\gamma_t]\,,
	\end{eqnarray}
	by applying (\ref{sum_over_amplitudes_under_time_transform}) iteratively.
	Now, ${E^{(i)}_0} = \lbrace \gamma_0 \rbrace$, where $\gamma_0(0)=i$ and so
\begin{equation}	
	\sum \limits_{\gamma_t \in E_t\vert_f} a[\gamma_t] = \sum \limits_{i \in I(E)} \sum \limits_{j=0}^{n-1} \left[ U(n)^{t} \right]_{jf} \psi_i \delta_{ij} \, + \, \sum \limits_{i \not \in I(E)} \sum \limits_{\gamma_t \in {E^{(i)}_t\vert_f}} a[\gamma_t]\,.
\end{equation}
And for further simplicity we can choose a $t=4nm \geq t_E$ for some large enough $m \in \mathbb{N}$ to make $U(n)^t=\mathbf{1}$ via Corollary \ref{U_unity}. Now, we want to find a null superset $F \supset E$. Following the same method as before we split it into two parts $F=E \cup G$, $E \cap G = \emptyset$. Note that necessarily $G^{(i)}=\emptyset$ for $i\in I(E)$, therefore
\begin{equation}
	\mu(F)=\sum \limits_{f=0}^{n-1} \Bigg\vert \sum \limits_{i \in I(E)} \psi_i \delta_{if} \, + \, \sum \limits_{i \not\in I(E)} \sum \limits_{\gamma_t \in {F^{(i)}_t\vert_f}} a[\gamma_t]
 \Bigg\vert^2\,.
\end{equation}

We will now restrict ourselves to an $n$-site hopper with odd $n$. Furthermore, we will consider the `worst case scenario' of an event $E$ which does \textit{not} contain \textit{all} paths that start at site $\check{\imath}$, but does contain all paths that do \textit{not} start at site $\check{\imath}$, such that $I(E)=\mathbb{Z}_n \setminus \lbrace \check{\imath} \rbrace$. Note that if we can find a null superset for all such events then this implies we can do so for every time-finite event $E \neq \Omega(n)$. In addition, we will choose initial amplitudes of the form $\psi_i = c z_i {\omega_n}^{q_i}$ for some (non-unique) $c \in \mathbb{C}$, $z_i \in \mathbb{Z}$ and $q_i \in \mathbb{Z}_n$. Note that we could have generalised $z_i$ to be a rational number, but then we could just write all the $z_i$ with the same denominator and absorbed it into $c$. Without loss of generality we will choose $q_{\check{\imath}}=0$ by changing $c$.

Whilst maintaining $t=4nm$, we will also choose $T=4nM$, $M>m$, such that $U(n)^{T-t}=\mathbf{1}$. Then, as in Theorem \ref{nirvana}'s proof, we can split the sum over path amplitudes for $F_t$ into two disjoint sums over $E_t$ and $G_T$ to eventually derive
\begin{eqnarray}
	\mu(F) &= \sum \limits_{f=0}^{n-1} \left\vert \psi_f (1-\delta_{\check{\imath}f}) + \psi_{\check{\imath}} \sum \limits_{p =0}^{n-1} \left[ \frac{s^t_{\check{\imath}fp}(E)}{n^{t/2}} + \frac{s^T_{\check{\imath}fp}(G)}{n^{T/2}} \right] {\omega_n}^p \right\vert^2
	\\ & \propto \sum \limits_{f=0}^{n-1} \left\vert \sum \limits_{p=0}^{n-1} \left[ \delta_{pq_f}(1-\delta_{\check{\imath}f}) z_f + z_{\check{\imath}} \frac{s^t_{\check{\imath}fp}(E)}{n^{t/2}} +  z_{\check{\imath}} \frac{s^T_{\check{\imath}fp}(G)}{n^{T/2}} \right] {\omega_n}^p \right\vert^2 \, .
\end{eqnarray}
If $z_{\check{\imath}}=0$ then we cannot find a $G$ such that $\mu(F)=0$, which makes sense since all the histories in $G$ would have zero amplitudes. We will now assume $z_{\check{\imath}}\neq 0$.

Now, as before, we must form $G$ using the cylinder sets of paths in $\bar{E}_T$, and again we want to choose a particular number of paths that start at $\check{\imath}$, end at site $f$ and with a phase of ${\omega_n}^p$, but this is restricted by the number $s^T_{\check{\imath}fp}(\bar{E})$. However we know from Lemma \ref{lower_s_bound} that this is bounded from below by
\begin{equation}
	s^T_{\check{\imath}fp}(\bar{E}) \geq c_{\check{\imath}} n^{T-t_E-2n+1} \, .
\end{equation}
Naively, one might use the arguments from Theorem \ref{nirvana}'s proof to also argue that, provided we choose a large enough $T$, we should be able to choose a $G$ such that
\begin{equation}
	\fl s^T_{\check{\imath}fp}(G) = n^{(T-t)/2} \left( \max \limits_{q} s^t_{\check{\imath}fq}(E) - s^t_{\check{\imath}fp}(E) \right)  + n^{T/2}(1 - \delta_{\check{\imath}f}) \left( (1-\delta_{pq_f}) \frac{z_f}{z_{\check{\imath}}} + \frac{\vert z_f \vert}{\vert z_{\check{\imath}} \vert} \right)
\end{equation}
for all $f,p \in \mathbb{Z}_n$. However, such a choice would not be possible if the right hand side is not a non-negative integer. The right hand side is non-negative but is not necessarily an integer, and so we find that $z_{\check{\imath}}$ must divide $n^{T/2}z_f$ for all $f$ for some $T=4nM$. Note that if we satisfy this at some $M$ then it remains satisfied at $M'>M$ and so we are still free to make $M$ larger.

Therefore, if $z_{\check{\imath}}$ divides $n^{T/2}z_f$, we can make this choice for $G$, and we can use (\ref{sum_over_roots_of_unity1}) to show that $\mu(F)=0$. Note that if we tried to apply this argument to the even $n$-site hopper we would have the further restriction that $s^T_{\check{\imath}fp}(G)=0$ necessarily if $p \not \in \mathbb{Z}_{n,\check{\imath}-f}$, and so if $q_f \not \in \mathbb{Z}_{n,\check{\imath}-f}$ we would not be able to satisfy the condition.

\section{Discussion}\label{analysis}

Our result shows that the Multiplicative Scheme for the $n$-site hopper has a rather severe ``infrared'' problem stemming from the existence of null events with arbitrarily large defining times. To what extent does this indicate a problem with the Multiplicative Scheme more generally? 

One possibility for escaping the conclusion that no time-finite event happens is to postulate that the universe is finite in extent in time. For the $n$-site hopper, this  translates into a time $t_{\mathrm{max}}$ at which the universe ends, so we are not free to look for stymieing null events with defining times larger than $t_{\mathrm{max}}$. Inspecting the proof of  Theorem \ref{nirvana} we see that to stymie an event $E$, we constructed a null event with defining time that was at least of order $n$ larger than $t_E$. So, for example, if $n\gg t_{\mathrm{max}}$, then it 
is likely that this issue can be avoided. 

Even if the ``no time-finite event happens'' problem can be alleviated in some way, the stymieing of events by 
null events with later defining times is inherent in the Multiplicative Scheme. Consider for example the $2$-site hopper event,
\begin{eqnarray}
	E &= \cyl{001} \cup \cyl{0000} \cup \cyl{0101}
	\\ & =  \cyl{0010}\cup \cyl{0011} \cup \cyl{0000} \cup \cyl{0101}
\end{eqnarray}
with $t_E=3$. Then we have,  $s^3_{000}=s^3_{002}=s^3_{011}=s^3_{013}=1$ 
which implies that $E$  is null.  $E$ stymies subevents with the same defining time, such as  $\cyl{0000} \cup \cyl{0101}$. It also stymies events of the form ``$E$ \textit{and then} something''. An example is $\cyl{00101} \cup \cyl{00111} \cup{00001} \cup \cyl{01011}$ which corresponds to ``$E$ and then the hopper is at site $1$ at time $t=4$''. These are examples of physical inferences of  a sort familiar from classical physics. But stymieing also blocks events to $E$'s past, such as $\cyl{001}$. It is this latter effect, this essentially  global-in-time nature of the multiplicative scheme which is  disturbing, and it is not confined to the $n$-site hopper.\footnote{These problems are absent when the theory is classical  because a classical (probability) measure obeys $\mathrm{Prob}(G)=0 \Rightarrow \mathrm{Prob}(E)=0$ for any $G \supset E$. Therefore, there is no stymieing since subevents of a null event are null and precluded in their own right.}

This could be an appealing result to anyone who thinks of the universe as a single, simultaneously existing, spacetime \textit{block} \cite{2002_block_view_Paul_Davies,2011_no_flow_time_Huw_Price}, but for those who favour the concept of Becoming, the stymieing of events in the past is in conflict with the idea of a physical passage of time  \cite{2010_causal_sets_non_existent_future_Rafael_Sorkin,2014_causal_sets_birth_Fay_Dowker}. Moreover, this feature makes physical prediction highly impractical: to know what events are possible at early times we have to look at all future events and their measures first.

One could argue that the $n$-site hopper is an exception because it is highly symmetric and regular, which means that the number of $t$-paths in $\Omega(n)_t$ grows exponentially whilst the number of $t$-path phases is fixed at $ n'(n)$, resulting in many null events. However, it seems plausible that there exists a larger class of models with unitary transfer matrices, $U$, such that for every time-finite event $E$ there exists a sequence of time-finite supersets $F_k \supset E$ with increasing $t_{F_k}$, such that for any given $\epsilon>0$ we can find a $K$ such that $\mu(F_k) < \epsilon \quad \forall ~ k \geq K$. The question of how to deal with ``approximate preclusion'' is an open one in Quantum Measure Theory but for such a situation, it would be hard to avoid the conclusion that $E$ would be stymied.

To work towards finding such a class of models, consider a hopper on the same $n$-spatial-sites lattice, with the same set of histories $\Omega(n)$ but with a general unitary transfer matrix $U$ of dimension $n$. 
And consider an event $E$.  Then for $t > t_E$
\begin{eqnarray}
	\left\vert \sum \limits_{\gamma_{t} \in E_t\vert_f} a[\gamma_{t}] \right\vert &= \left\vert \sum \limits_{j=0}^{n-1} [U^{t-t_E}]_{jf} \sum \limits_{\gamma_{t_E} \in E_{t_E}\vert_j} a[\gamma_{t_E}] \right\vert
	\\ & \leq \sum \limits_{j=0}^{n-1} \left\vert [U^{t-t_E}]_{jf} \right\vert \left\vert \sum \limits_{\gamma_{t_E} \in E_{t_E}\vert_j} a[\gamma_{t_E}] \right\vert
	\\ & \leq \sum \limits_{j=0}^{n-1} \left\vert \sum \limits_{\gamma_{t_E} \in E_{t_E}\vert_j} a[\gamma_{t_E}] \right\vert \, ,
\end{eqnarray}
where we used that $\vert U_{jk} \vert \leq 1$ for a unitary matrix. So whilst in the $n$-site hopper we could find times where the sum of amplitudes over $t$-paths in $E_t$ was unchanged, for a more general unitary system this sum of amplitudes is bounded in magnitude from above. And so one could potentially proceed along the same lines as the proof for Theorem \ref{nirvana} by arguing that for a large enough $t$ there are enough paths outside of $E$ with the right amplitudes to cancel $E$'s amplitude, and since the amplitudes are becoming smaller in magnitude for longer paths, but not too small, we can make this cancellation more and more precise to make the measure as small as we like. There is reason to believe that one could show this using the fact that the quantum measure for most unitary systems is not of bounded variation \cite{2010_extending_quantum_measure_Fay_Dowker_and_Steven_Johnston_and_Sumati_Surya}.

There is another, related, aspect of the multiplicative scheme that seems problematic: for unitary systems the only way that a history, $\gamma\in \Omega$, can distinguish itself \textit{dynamically} from another, $\gamma'$ that ends at the same position, is via its amplitude. If the amplitudes of $\gamma$  and $\gamma'$ are equal the dynamical laws that govern the allowed co-events of the Multiplicative Scheme treat them the same. For example, given a null event, $E$, we can replace any history, $\gamma \in E$ with another history, $\gamma' \notin E$ with the same amplitude to get a new null event, $E'$. And if $\gamma$ is an element of the support of an allowed multiplicative co-event, then it can be replaced by $\gamma'$ with the same amplitude to give another allowed co-event. If the $n$-site hopper is a guide, there will always be many more histories than there are amplitudes and there will be wildly different histories sharing the same amplitude. It seems that \textit{any}  scheme that treats all histories with the same amplitude equally would struggle to provide an explanation of the emergence of classical behaviour from quantum theory. 

There are at least two possible avenues for revising the scheme: change the preclusion law or give up multiplicativity.  As a physical guide to how to make these revisions we can look to the nature of the model as a temporal process and to the heuristic of Becoming. For example, we could consider relaxing the strict Law of Preclusion and instead adopt a new rule that null events that would stymie earlier events are not precluded: ``Prior Existence trumps Law.'' Or, we could keep the strict preclusion rule but drop the multiplicative condition in favour of an \textit{evolving} construction in which, for each $t$ there is a \textit{time-finite co-event} $\phi_t$ on the subalgebra, $\mathfrak{A}_t$ of events with defining time $t$ or less. This partial co-event corresponds to the physical world of the hopper system up to time $t$. We build up the co-event step by step and require that at each time $t$ the partial co-event agrees with the previous partial co-event on subalgebra $\mathfrak{A}_{t-1}$. Note that the number of potential partial co-events  grows as $2^{2^{\vert \Omega_t \vert}}$ and there must be additional conditions on allowed co-events in order for the scheme to be viable. This is under current investigation. 

Finally, the $n$-site hopper was introduced as a discrete analogue of the free  particle in the continuum. We can ask to what extent this is more than an analogy: is there a continuum limit for the $n$-site hopper which gives the continuum propagator? Consider the $n$-site hopper propagator, $\left[U(n)^t \right]_{if}$. Using the eigenvectors and values from Lemma \ref{U_eigen_vectors_and_values}, we have
\begin{equation}
	\fl \left[U(n)^t \right]_{if} = \cases{
		\frac{1}{2n} e\left( \frac{t}{8} \right) \sum \limits_{k=0}^{2n-1} e \left( \frac{2(i-f)k -tk^2 }{2n} \right) & for even $n$
		\\ \frac{1}{n} \sum \limits_{k=0}^{n-1} e \left( \frac{2(i-f)k -tk^2}{n} \right) & for $\Emod{n}{1}{4}$ 
		\\  \frac{1}{n} e\left( \frac{t}{4} \right) \sum \limits_{k=0}^{n-1} e \left( \frac{2(i-f)k-tk^2}{n} \right) & for $\Emod{n}{3}{4}$\,.
	}
\end{equation}
These  Gauss sums  can be transformed to other Gauss sums over $\mathcal{O}(t)$ terms using the Landsberg-Schaar relation given in Lemma \ref{U_eigen_vectors_and_values}. The values of these sums are not generally known, however we can compare this to the propagator for a particle on a ring of radius $R$
\begin{equation}
	K^{(\mathrm{ring})}(\phi_f,t^{(c)};\phi_i,0) = \frac{1}{2\pi} \sum \limits_{k=-\infty}^{\infty} \mathrm{exp}\left( \rmi (\phi_f - \phi_i) k - \rmi \frac{\hbar}{2mR^2} t^{(c)} k^2 \right) \, ,
\end{equation}
where $\phi$ is the angular position on the ring. This \textit{looks} very similar in form to the Gauss sums for the $n$-site hopper. It remains to be shown if there is a way to take the limit $n\rightarrow \infty$ so that $\left[U(n)^t \right]_{if}$ (appropriately rescaled) tends to the propagator on the ring, or the free propagator on the line. If so, this would increase the relevance of the $n$-site hopper. The decoherence functional or quantum measure in the continuum is not yet able to be defined for events other than finite unions of events corresponding to finite sequences of projections onto 
position \cite{1993_spacetime_QM_lectures_Jame_Hartle, 2012_towards_a_fundamental_Rafael_Sorkin, 2010_history_hilbert_construction_Fay_Dowker_Steven_Johnston_Rafael_Sorkin, 2013_path_integral_pitfall_Jonathan_Halliwell_and_James_Yearsley}. If the $n$-site hoppers have a continuum limit  one could potentially use this to \textit{define} the quantum measure of an event in the continuum as the limit of a sequence of appropriate $n$-site hopper events as $n\rightarrow \infty$.

\ack
We thank Rafael Sorkin for helpful discussions and the suggestion to use Gauss sums 
to calculate powers of the transfer matrix. Research at Perimeter Institute for Theoretical Physics is supported in part by the Government of Canada through NSERC and by the Province of Ontario through MRI.
FD is supported by STFC grant ST/L00044X/1. 
HW is supported by the STFC.

\section*{References}


\end{document}